\documentclass[11pt]{article}

\usepackage[margin=1in]{geometry}
\usepackage{amsmath,amssymb,amsthm}
\usepackage{bm}
\usepackage{graphicx}
\usepackage{booktabs}
\usepackage{longtable}
\usepackage{array}
\usepackage{enumitem}
\usepackage[authoryear,round]{natbib}
\usepackage{xcolor}
\usepackage{xr-hyper}
\usepackage[hidelinks]{hyperref}
\makeatletter\AtBeginDocument{\def\hyper@linkfile#1#2#3{#1\Hy@xspace@end}}\makeatother

\newtheorem{theorem}{Theorem}

\newtheorem{corollary}{Corollary}
\theoremstyle{definition}
\newtheorem{assumption}{Condition}

\newcommand{\E}{\mathbb{E}}

\newcommand{\Pn}{\mathbb{P}_n^{w}}
\newcommand{\PN}{\mathbb{P}_N}
\newcommand{\Pz}{P_0}
\newcommand{\Pzcl}{\mathcal{P}_0^{\mathrm{cl}}}
\newcommand{\D}{\mathrm{D}}
\newcommand{\Hg}{H_g}

\newcommand{\Nhat}{\widehat N}
\newcommand{\psihat}{\widehat\psi_w}

\newcommand{\od}{\mathrm{o}}
\newcommand{\swor}{\sigma_{m,\mathrm{wor}}}


\newcommand{\AppCovMax}{97.4}
\newcommand{\AppCovMcseMax}{1.0}
\newcommand{\AppCovMcseMin}{0.7}
\newcommand{\AppCovMin}{94.4}
\newcommand{\AppFpcBound}{\ensuremath{1.33}}
\newcommand{\AppFpcBoundSe}{\ensuremath{1.15}}

\newcommand{\AppHtsFacilities}{775}

\newcommand{\AppHtsOverlapSens}{7.9}
\newcommand{\AppHtsPsu}{195}

\newcommand{\AppHtsRecords}{4{,}833}

\newcommand{\AppHtsTmleRef}{\ensuremath{-4.31}}

\newcommand{\AppHttAipwVr}{\ensuremath{1.136}}
\newcommand{\AppHttAipwVrMcse}{\ensuremath{0.089}}

\newcommand{\AppHttFacilities}{734}

\newcommand{\AppHttOverlapSens}{8.3}
\newcommand{\AppHttPsu}{183}

\newcommand{\AppHttRecords}{5{,}364}

\newcommand{\AppHttTmleRef}{\ensuremath{-2.96}}

\newcommand{\AppMinSampledAll}{24}

\newcommand{\AppMwAipwSeSd}{\ensuremath{1.089}}

\newcommand{\AppMwExcluded}{2}
\newcommand{\AppMwFacilities}{808}

\newcommand{\AppMwOverlapPrim}{0.0}
\newcommand{\AppMwOverlapSens}{35.6}
\newcommand{\AppMwPsu}{202}

\newcommand{\AppMwRecords}{6{,}894}
\newcommand{\AppMwSensTmleCov}{85.8}
\newcommand{\AppMwSensTmleVr}{\ensuremath{0.847}}
\newcommand{\AppMwTmleBoundZ}{2.3}

\newcommand{\AppMwTmleMeanSePp}{\ensuremath{5.12}}
\newcommand{\AppMwTmleRef}{\ensuremath{-0.10}}
\newcommand{\AppMwTmleSdPp}{\ensuremath{4.27}}

\newcommand{\AppMwTmleSeSd}{\ensuremath{1.199}}
\newcommand{\AppMwTmleVr}{\ensuremath{1.573}}
\newcommand{\AppMwTmleVrMcse}{\ensuremath{0.106}}
\newcommand{\AppSePctMax}{20}
\newcommand{\AppSePctMin}{4}
\newcommand{\SimBridge}{\ensuremath{0.063}}
\newcommand{\SimBridgeFirst}{\ensuremath{0.704}}
\newcommand{\SimBridgeInt}{[\ensuremath{0.053}, \ensuremath{0.085}]}
\newcommand{\SimBridgeLastTraj}{\ensuremath{0.261}}

\newcommand{\SimCells}{41}

\newcommand{\SimCfDqT}{\ensuremath{0.309}}
\newcommand{\SimCfDqTInt}{[\ensuremath{0.297}, \ensuremath{0.323}]}
\newcommand{\SimCfDqTwo}{\ensuremath{0.793}}
\newcommand{\SimCfDqTwoEight}{\ensuremath{0.888}}
\newcommand{\SimCfDqTwoFirst}{\ensuremath{1.326}}

\newcommand{\SimCfDqTwoSixteen}{\ensuremath{0.895}}

\newcommand{\SimCfDtT}{\ensuremath{0.279}}

\newcommand{\SimCfDtTInt}{[\ensuremath{0.268}, \ensuremath{0.289}]}
\newcommand{\SimCfDtTwo}{\ensuremath{0.578}}
\newcommand{\SimCfDtTwoDropPct}{57}
\newcommand{\SimCfDtTwoFirst}{\ensuremath{1.338}}
\newcommand{\SimCfDtTwoInt}{[\ensuremath{0.550}, \ensuremath{0.605}]}
\newcommand{\SimCfDtTwoTimes}{5.8}
\newcommand{\SimCfRaDropMax}{1.7}
\newcommand{\SimCfRaDropMin}{1.4}

\newcommand{\SimCfRaFirst}{\ensuremath{1.208}}
\newcommand{\SimCfRaQ}{\ensuremath{0.154}}
\newcommand{\SimCfRaQInt}{[\ensuremath{0.148}, \ensuremath{0.163}]}
\newcommand{\SimCfRaTimes}{1.5}
\newcommand{\SimCfRaTtwoTerm}{\ensuremath{0.075}}
\newcommand{\SimCoFpcKS}{\ensuremath{0.057}}
\newcommand{\SimCoFpcKSInt}{[\ensuremath{0.051}, \ensuremath{0.093}]}
\newcommand{\SimCoFpcMeanZ}{\ensuremath{-0.007}}
\newcommand{\SimCoFpcMeanZInt}{[\ensuremath{-0.097}, \ensuremath{0.081}]}

\newcommand{\SimCoFpcSDZ}{\ensuremath{1.203}}
\newcommand{\SimCoFpcSDZInt}{[\ensuremath{1.146}, \ensuremath{1.263}]}
\newcommand{\SimCoFpcVarExcessPct}{45}
\newcommand{\SimCoMsKS}{\ensuremath{0.020}}
\newcommand{\SimCoMsKSInt}{[\ensuremath{0.015}, \ensuremath{0.045}]}
\newcommand{\SimCoMsMeanZ}{\ensuremath{-0.015}}
\newcommand{\SimCoMsMeanZInt}{[\ensuremath{-0.059}, \ensuremath{0.029}]}
\newcommand{\SimCoMsSD}{\ensuremath{0.010}}
\newcommand{\SimCoMsSDInt}{[\ensuremath{-0.022}, \ensuremath{0.042}]}
\newcommand{\SimCoStrKS}{\ensuremath{0.031}}
\newcommand{\SimCoStrKSInt}{[\ensuremath{0.026}, \ensuremath{0.067}]}

\newcommand{\SimCoStrSD}{\ensuremath{0.069}}
\newcommand{\SimCoStrSDInt}{[\ensuremath{0.013}, \ensuremath{0.126}]}
\newcommand{\SimCovSEBig}{\ensuremath{0.005}}
\newcommand{\SimCovSESmall}{\ensuremath{0.008}}
\newcommand{\SimCsPsi}{\ensuremath{0.0066}}
\newcommand{\SimCsPsiInt}{[\ensuremath{0.0063}, \ensuremath{0.0072}]}
\newcommand{\SimCsSe}{\ensuremath{0.0013}}
\newcommand{\SimCsSeInt}{[\ensuremath{0.0012}, \ensuremath{0.0014}]}

\newcommand{\SimCtRaMean}{\ensuremath{0.008}}
\newcommand{\SimCtRaMeanInt}{[\ensuremath{0.006}, \ensuremath{0.011}]}

\newcommand{\SimCtRaSD}{\ensuremath{0.055}}

\newcommand{\SimCtRaTr}{\ensuremath{0.016}}

\newcommand{\SimCtRbMean}{\ensuremath{0.130}}
\newcommand{\SimCtRbMeanInt}{[\ensuremath{0.126}, \ensuremath{0.134}]}
\newcommand{\SimCtRbPct}{13.0}

\newcommand{\SimCtRbSEPct}{6.3}

\newcommand{\SimCtbJFourCovDetMax}{\ensuremath{0.993}}
\newcommand{\SimCtbJFourCovDetMed}{\ensuremath{0.986}}
\newcommand{\SimCtbJFourCovDetMin}{\ensuremath{0.978}}

\newcommand{\SimCtbJFourOverlayMed}{\ensuremath{0.985}}

\newcommand{\SimCtbWorstShortfall}{\ensuremath{0.006}}

\newcommand{\SimCthaMax}{\ensuremath{0.014}}
\newcommand{\SimCthaMin}{\ensuremath{-0.011}}
\newcommand{\SimCthaN}{12}

\newcommand{\SimEtCovNinetyFivePct}{96.9}

\newcommand{\SimEtOracleCovNinetyFivePct}{98.3}

\newcommand{\SimEtOracleSDZ}{\ensuremath{1.045}}
\newcommand{\SimEtRvar}{\ensuremath{1.146}}
\newcommand{\SimEtRvarPct}{14.6}

\newcommand{\SimEtRvarSqrtPct}{7}
\newcommand{\SimEtSigRatioMax}{\ensuremath{1.28}}

\newcommand{\SimEtSigRatioMin}{\ensuremath{1.22}}

\newcommand{\SimFsCovGapMax}{2.4}
\newcommand{\SimFsCovGapMin}{1.6}

\newcommand{\SimFsDqCovPct}{93.7}
\newcommand{\SimFsDqSDZ}{\ensuremath{1.141}}

\newcommand{\SimFsDtCovPct}{94.5}
\newcommand{\SimFsDtSDZ}{\ensuremath{1.073}}

\newcommand{\SimFsOrCovPct}{96.2}

\newcommand{\SimFsOrCovThreePct}{96.1}

\newcommand{\SimFsRaCovPct}{96.5}
\newcommand{\SimFsRaSDZ}{\ensuremath{0.982}}

\newcommand{\SimPcAlign}{\ensuremath{0.300}}
\newcommand{\SimPcAlignInt}{[\ensuremath{0.299}, \ensuremath{0.301}]}

\newcommand{\SimPcShift}{\ensuremath{-0.306}}
\newcommand{\SimPcShiftInt}{[\ensuremath{-0.372}, \ensuremath{-0.241}]}
\newcommand{\SimPcUnit}{\ensuremath{0.694}}
\newcommand{\SimPcUnitInt}{[\ensuremath{0.691}, \ensuremath{0.697}]}
\newcommand{\SimRbCovEightPct}{95.7}
\newcommand{\SimRbMeanZEight}{\ensuremath{0.26}}
\newcommand{\SimRbMeanZFifty}{\ensuremath{0.10}}

\newcommand{\SimRbUpper}{\ensuremath{-0.150}}

\newcommand{\SimStrictMax}{\ensuremath{0.41}}

\newcommand{\SimStrictMin}{\ensuremath{0.31}}

\newcommand{\SimUnresolvedOther}{33}
\newcommand{\SimUnresolvedTMLE}{47}
\newcommand{\SimUnresolvedTotal}{80}

\newcommand{\proofloc}{Section~\ref{S:supp:proofs} of the Supplementary Materials}

\title{Finite-population inference for cross-fitted estimators of treatment contrasts in stratified
multistage surveys}
\author{M.\ Ehsan Karim\\[4pt]
{\small School of Population and Public Health, University of British Columbia}\\
{\small Centre for Advancing Health Outcomes, St.\ Paul's Hospital}\\
{\small Vancouver, British Columbia, Canada}\\
{\small ehsan.karim@ubc.ca\quad ORCID 0000-0002-0346-2871}}
\date{}

\begin{document}

\maketitle

\noindent\textit{Running title:} Cross-fitted inference in stratified multistage surveys

\section*{Funding}
This research received no specific grant from any funding agency in the public, commercial, or not-for-profit
sectors.

\section*{Acknowledgements}
This research was supported in part through computational resources from Advanced Research Computing at the
University of British Columbia. The Service Provision Assessment data were obtained from the Demographic and Health Surveys (DHS) Program.
During the preparation of this manuscript, the author used Claude (Anthropic) to assist with the development of 
simulation and statistical analysis code, manuscript editing, and verification of mathematical derivations. All AI-assisted outputs were 
independently reviewed and verified by the author, who takes full responsibility for the accuracy and integrity of the work.

\section*{Conflict of interest}
The author declares no potential conflict of interest.

\clearpage

\begin{abstract}
\noindent Health surveys often sample clusters within strata and then individuals within clusters.
Fitting flexible outcome and treatment models complicates inference for adjusted contrasts from such surveys.
We establish finite-population limit theory for survey-weighted augmented inverse-probability-weighted estimation
within the double machine learning framework, using cross-fitting that keeps whole clusters together.
A cross-fitted targeted maximum likelihood estimator inherits the results under additional targeting conditions.
With first-stage clusters drawn by simple random sampling without replacement within strata, the theory allows
arbitrary within-cluster dependence, appreciable sampling fractions, and many small strata, a few large strata, or both.
Under bounded cluster sizes, a cluster-generation model and nuisance convergence-rate conditions, the estimator is
asymptotically normal about the finite-population contrast, with both the population and the sample treated as random.
A stability property derived from the cluster model controls the cross-fitting error without entropy conditions
on the nuisance-function classes.
We also show that the stratified ultimate-cluster variance estimator, computed from between-cluster variation
in weighted fitted-score totals without a first-stage finite-population correction, consistently estimates
a with-replacement variance.
This variance is at least as large as the without-replacement design variance governing the normal limit.
The resulting survey intervals for the finite-population contrast are asymptotically conservative and become
asymptotically exact when the excess is negligible relative to the with-replacement variance.
In simulations, the estimator with correctly specified nuisance models was close to its normal limit at the
largest sample size with many strata.
Nevertheless, near-nominal or higher coverage coexisted, in different settings, with remainder diagnostics above
the study's negligibility benchmark and with inaccurate standard errors.
We illustrate the methods with three Service Provision Assessment surveys from Malawi and Haiti, estimating
adjusted urban--rural differences in the proportion of respondents very satisfied with antenatal,
family-planning or sick-child care.
Across 500 stratified two-stage resamples per survey, with each set of eligible records treated as a fixed
population, mean reported standard errors were 4\%--20\% larger than the estimates' empirical standard deviations.
Coverage of reference values estimated from all eligible records in each survey frame was near or above
nominal despite this excess.
Comparing estimated and empirical variances across samples revealed the discrepancy.
\end{abstract}

\noindent\textit{Keywords:} Complement stability; Cross-fitting; Design-based inference; Double machine learning;
Finite population; Stratified multistage sampling; Targeted maximum likelihood estimation; Ultimate-cluster variance.

\section{Introduction}\label{sec:intro}

\paragraph{Survey setting and inferential target.}
Health-facility surveys such as the Service Provision Assessment (SPA) select facilities from a national list 
stratified by characteristics such as facility type and region. They then interview clients at the selected facilities
\citep{TurcotteTremblay2023}. \citet{TurcotteTremblay2023} used exit-interview data from five countries,
including Malawi and Haiti, to study factors associated with caretakers
reporting that they were very satisfied with sick-child care.
These factors included whether the facility was urban. Our illustration uses the SPA surveys of Malawi and Haiti to estimate
covariate-adjusted urban--rural differences in the proportion of clients
reporting that they are very satisfied
\citep{MalawiSPA2014,HaitiSPA2014,HaitiSPA2019}. Propensity-score methods with survey weights let such
adjusted estimates generalize to the survey's target population \citep{DuGoff2014}, and the weights can enter both
the propensity and the outcome models \citep{Ridgeway2015}.

Augmented inverse-probability-weighted (AIPW) estimation combines an outcome regression with a treatment (or
exposure) propensity model through a Neyman-orthogonal estimating score. When both models are fitted flexibly,
double machine learning (DML) pairs such scores with cross-fitting \citep{Chernozhukov2018}; the cross-fitted AIPW
estimator studied here is an instance. Targeted maximum likelihood estimation (TMLE) updates an initial outcome
regression through a targeting step \citep{vanderLaanRose2011}. Standard independent-observation arguments for
cross-fitting do not directly cover a stratified multistage design. Outcomes of clients in the same facility may be
dependent, and selecting facilities without replacement within a stratum makes their selection indicators
dependent. \citet{Chiang2022} develop cross-fitting for multiway clustered data; their sampling framework treats observations 
that share no cluster as independent, so it does not address the dependence created by selecting clusters without replacement.

The target also matters. Design-based inference holds the population fixed and repeats only the sampling; the
joint model--design framework also treats the population as one draw from a generating model
\citep{RubinBleuerKratina2005}. The finite-population contrast $\Psi_N$ is the population average of the AIPW
estimating function at the true nuisance functions; it describes the population
from which the survey was drawn. The super-population contrast $\Psi$ is the corresponding mean under the
generating model. Under the conditions used here, the difference between the two targets is negligible relative to the standard error
when the overall first-stage sampling fraction vanishes (Lemma~\ref{S:lem:modterm} in Section~\ref{S:supp:lem-modterm} of the
Supplementary Materials), as in the classical decomposition for a sample mean
\citep[equation~1.1]{RubinBleuerKratina2005}; this is a sufficient condition, not a necessary one. 
Facility surveys need not sample a small fraction of facilities, and
some set out to include them all \citep{HaitiSPA2019}. Our main result therefore concerns $\Psi_N$.

The sampling fraction also bears on the reported standard error. The usual stratified ultimate-cluster variance
estimator is computed from the variation of weighted totals between sampled primary sampling units (PSUs)
within each stratum \citep[Sections~2.4 and 6.15]{Wolter2007}. It treats the sampled PSUs as drawn with replacement, the setting of
classical linearization results \citep{KrewskiRao1981}, and so omits the first-stage finite-population
correction; here it is applied to the fitted score.

\paragraph{Related work.}
Design-based central limit theorems and variance estimators are classical for fixed statistics: under rejective
sampling with unequal probabilities \citep{Hajek1964}, stratified sampling \citep{BickelFreedman1984} and stratified
multistage sampling with PSUs drawn with replacement \citep{KrewskiRao1981}. Functional limits for Horvitz--Thompson
and H\'ajek estimators cover high-entropy single-stage designs \citep{Boistard2017}. For cross-fitted AIPW and TMLE,
these fixed-function results must be supplemented by control of the error introduced by estimating the nuisance
functions.

Recent work brings cross-fitting into survey sampling. Three works by other authors are closest to ours, and an
earlier preprint by the author addresses a related question. \citet{An2026} study
cross-fitted model-assisted estimation of a finite-population mean in the design-based framework, under single-stage
designs, with the population held fixed, auxiliary variables known for every population unit and expected sampling
fractions bounded away from zero. They characterize the designs under which a random fold partition of the population
makes the sampling indicators conditionally independent across folds (the exponential designs, which include simple
random, Bernoulli, Poisson and conditional Poisson sampling). \citet{DagdougHaziza2026} review these developments,
including a cross-fitted augmented estimator for item nonresponse whose learned response propensity requires a
product-rate condition, with a variance estimator for a negligible sampling fraction. \citet{ChibaMorikawa2026}
derive semiparametric efficiency bounds under the joint model--design law and give cross-fitted estimators of means
and regression coefficients that attain them. Their verified designs include stratified two-stage sampling with a
fixed number of strata and Poisson or rejective selection of PSUs, which includes simple random sampling without
replacement of PSUs, with each PSU kept whole within a fold and the folds fixed before sampling. Their framework
observes covariates for every population unit, allows no further dependence among outcomes within clusters and
estimates joint-law rather than design-conditional variances; they leave within-cluster outcome dependence and
design-conditional inference to future work. The earlier preprint \citep{KarimTMLE2026} examined cross-fitted
survey-weighted TMLE of a super-population average treatment effect under stratified multistage sampling, with PSUs
kept whole within folds. It studied asymptotic normality and consistency of the ultimate-cluster linearization
variance estimator under conditions that included a finite-population no-memorization condition on the out-of-fold
nuisance fits.

\paragraph{Contribution and scope.}
None of these works covers the combination studied here. Whole PSUs are selected without replacement within
strata, possibly two in each of a growing number of strata; folds are formed from the sampled PSUs; outcomes may be
arbitrarily dependent within a PSU; and only sampled units are observed. The target is a treatment contrast whose
propensity score must be learned, so, as for the item-nonresponse estimator, a product-rate condition on the
nuisance errors remains. The sampling fractions may vanish or stay appreciable, provided they are bounded away from
one. As in \citet{An2026}, the limit of the leading term is conditional on the finite population, whereas our
control of the fitted-nuisance errors uses a model for the clusters. Compared with the analysis in
\citet{KarimTMLE2026}, the primary
target is the finite-population contrast rather than a super-population effect, the stability property that
controls the cross-fitting error is derived from the cluster model rather than assumed, and the first-stage fraction
may stay appreciable, so the without-replacement design variance that governs the normal limit is distinguished from
the ultimate-cluster variance target.

Two difficulties arise in this setting. The estimator averages a \emph{fitted} influence function trained on the
same sampled clusters, so a classical design-based central limit theorem (CLT) for fixed functions does not apply
verbatim, and the usual route, a Donsker (entropy) condition on the nuisance class, is restrictive for flexible
learners. And when the first-stage fraction is appreciable, the ultimate-cluster variance estimator need not
estimate the without-replacement design variance that governs the normal approximation, so it is not obvious what
it does estimate.

We study the survey-weighted, PSU-cross-fitted AIPW estimator and its cross-fitted TMLE variant. Their ingredients
are established; our contribution is limit theory and a variance interpretation for the combination in the setting
just described. The main result is asymptotic normality about $\Psi_N$ under the joint model--design law,
standardized by the leading term's without-replacement design standard deviation. It holds under first-stage simple
random sampling without replacement (SRS-WOR), whether the first-stage fraction is appreciable or vanishing, with
many small strata, a few large ones, or both. Cross-fitting at the PSU level makes each held-out PSU's nuisances a
fixed function of the training folds. A property we call \emph{complement stability}, which we prove from a
stratified-mixture model for the clusters rather than assume, then controls the empirical-process term without an
entropy condition on the learners. The nuisance-rate conditions are still required. The leading term's limit is
design-based; the model for the clusters is used to control the fitted-nuisance errors (Section~\ref{sec:setup}).

We show that the usual stratified ultimate-cluster variance estimator, evaluated at the cross-fitted score,
consistently estimates the leading term's with-replacement variance target. Under SRS-WOR, this target is at least
as large as the leading term's without-replacement design variance, a classical comparison for fixed PSU totals
\citep[Corollary~2.4.3]{Wolter2007}. The resulting Wald
interval is asymptotically at least nominal. It is asymptotically exact when the relative gap between the two
variances vanishes. A corollary extends the result to cross-fitted TMLE. Extending the cross-fitting argument to
unequal-probability first stages, including the probability-proportional-to-size selection that many household
surveys use, is left open.

\paragraph{Practical implications and assessment.}
For an analyst, the result states when the familiar recipe of forming folds from whole PSUs within strata and
reporting the ultimate-cluster standard error is justified, and what that standard error estimates when the
first-stage fraction is not small. A simulation with known finite-population values (Section~\ref{sec:sim}) examines
how closely these asymptotic approximations held at the sample sizes studied, including where they did not, and a
resampling study on the records of the three SPA surveys under a known imposed design (Section~\ref{sec:app})
evaluates the implemented standard errors. Code for reproducing the reported numerical results from saved output is
supplied. 

Section~\ref{sec:setup} describes how the population is generated and the sample selected, the two
repeated-sampling frameworks, the target, the estimator and the conditions. Section~\ref{sec:main} states the
theorem, Section~\ref{sec:sim} reports the simulation, Section~\ref{sec:app} the real-data illustration, and
Section~\ref{sec:disc} interprets the findings. The supplement contains the proofs (S1), the simulation's methods
(S2) and additional results (S3), and the application's details (S4).

\section{Framework, target, and estimator}\label{sec:setup}

We distinguish the finite population from the model used to generate it, then define the sampling design, the
target, the cross-fitted estimator and the conditions. The distinction identifies which parts of the result are
conditional on the observed population and which require model assumptions. Figure~\ref{fig:targets} separates
population generation from survey selection. The primary estimation error is $\psihat-\Psi_N$; interpreting the
estimator relative to $\Psi$ adds the discrepancy $\Psi_N-\Psi=S_{\mathrm{mod}}$ of \eqref{eq:Ssplit}, whose
negligibility needs a separate argument. Lemmas~\ref{S:lem:designclt}--\ref{S:lem:crossfit} and the auxiliary Lemma~\ref{S:lem:envelope} are stated and proved in Section~\ref{S:supp:proofs} of the Supplementary Materials.

\begin{figure}[tbp]
\footnotesize
\centering
\includegraphics{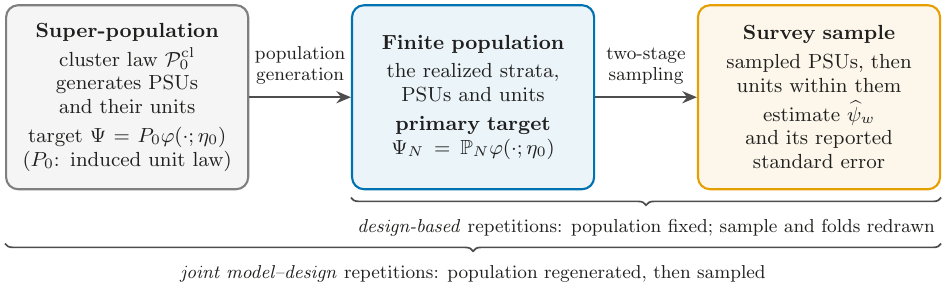}
\caption{Population generation, survey sampling and inferential targets. A super-population law $\Pzcl$ generates
a finite population of PSUs and their units. The super-population functional is $\Psi=\Pz\varphi(\cdot;\eta_0)$, with
$\Pz$ the size-weighted unit law induced by $\Pzcl$, and the realized population has census target
$\Psi_N=\PN\varphi(\cdot;\eta_0)$, the primary target of this paper. Two-stage survey sampling then produces $\psihat$
and its reported standard error. Design-based repetitions hold the finite population fixed and redraw the sample and
the fold assignment; joint model--design repetitions also regenerate the population. The fitted estimator's normal
limit (Theorem~\ref{thm:clt}) is established under the joint model--design law, centred at $\Psi_N$.}
\label{fig:targets}
\par\smallskip
{\scriptsize\raggedright\noindent\textbf{Alt text:} Three boxes from left to right, joined by arrows labelled
population generation and two-stage sampling: the super-population (cluster law and target $\Psi$), the finite
population (primary target $\Psi_N$) and the survey sample (estimate and its reported standard error). A brace under
the last two boxes marks design-based repetitions; a brace under all three marks joint model--design
repetitions.\par}
\end{figure}

\paragraph{Population generation.} A super-population law $\Pzcl$ generates a finite population of clusters
(PSUs) arranged in strata $h=1,\dots,H$, with $M_{\mathrm{pop},h}$ PSUs in stratum $h$ and
$M_{\mathrm{pop}}=\sum_hM_{\mathrm{pop},h}$ in all. Cluster $k=1,\dots,M_{\mathrm{pop}}$ carries a size $n_k$ and
units $\{O_{ki}\}_{i=1}^{n_k}$ that may be arbitrarily dependent within the cluster; the population holds
$N=\sum_k n_k$ units. The model for this step (Condition~\ref{cond:mix} below) draws clusters independently within
each stratum.

\paragraph{Sample selection.} A known two-stage scheme then selects PSUs and units. \emph{First stage:} $m_h\ge2$
PSUs are sampled without replacement per stratum, independently across strata, with known positive inclusion
probabilities $\pi_k$; $m=\sum_h m_h$ is the total number of sampled PSUs and $f_h:=m_h/M_{\mathrm{pop},h}$ the
first-stage sampling fraction of stratum $h$. Two summaries of these fractions are used and kept apart: the
\emph{largest stratum fraction} $\max_h f_h$ and the \emph{overall fraction} $m/M_{\mathrm{pop}}=\sum_h\rho_h f_h$,
$\rho_h=M_{\mathrm{pop},h}/M_{\mathrm{pop}}$. Because the overall fraction is a weighted average of the $f_h$,
$\max_h f_h\to0$ implies $m/M_{\mathrm{pop}}\to0$; the converse fails under unequal allocation. We retain general first-stage inclusion-probability notation $\pi_k$
because parts of the fixed-function design result (Lemma~\ref{S:lem:designclt} in Section~\ref{S:supp:lem-designclt} of the Supplementary Materials) hold beyond SRS-WOR: in the
many-strata regime for any independent, uniformly bounded strata, and the variance dominance also under
conditional-Poisson selection. The cross-fitted Theorem~\ref{thm:clt} assumes SRS-WOR at the PSU stage, while the
second-stage subsampling below may remain unequal-probability. \emph{Second stage:} within a sampled PSU
$k$, units are subsampled with known inclusion probabilities $\pi_{i\mid k}\ge\delta_2>0$, the selection in PSU $k$
generated locally from that PSU and a PSU-specific randomizer, independent across PSUs and of the first-stage and
fold randomization.

\paragraph{The resulting sample.} The full design weight of unit $i$ is
\[ w_i=1/(\pi_{k(i)}\pi_{i\mid k}). \]
Write $\Pn f=\Nhat^{-1}\sum_i w_i f(O_i)$, with $\Nhat=\sum_i w_i$, for the H\'ajek design average. The weight total
$\Nhat$ estimates the population size $N$, with $\Nhat/N\to1$ in probability (Lemma~\ref{S:lem:envelope} in Section~\ref{S:supp:lem-envelope} of the Supplementary Materials). This is
not in general an exact identity for a probability sample of PSUs: even under full within-PSU enumeration
($\pi_{i\mid k}\equiv1$) it is the first-stage estimator $\sum_{k\in s}n_k/\pi_k$, exact under a PSU census or a
constant PSU size. Finally, $\PN$ is the finite-population (census) mean and $\Pz$ the super-population mean.
Units within a sampled PSU remain dependent, PSUs sampled from the same stratum are dependent through
without-replacement selection, and strata are independent.

\paragraph{Standing conditions.} We impose five standing conditions:
\begin{enumerate}[label=(\roman*),leftmargin=*,itemsep=0pt,topsep=2pt,parsep=0pt]
\item bounded relative weights, $\max_i w_i\le C_w N/m$;
\item bounded cluster sizes, $1\le n_k\le n_{\max}$;
\item a bounded outcome, $|Y|\le C_Y$ a.s.;
\item first-stage sampling fractions $f_h$ bounded away from $1$;
\item the fitted outcome regression is clipped to a fixed $[q_0,1-q_0]\subset(0,1)$, and the fitted propensity is
truncated two-sidedly into $[\delta,1-\delta]$, with a fixed $\delta>0$ matching the positivity of $g_0$.
\end{enumerate}
Condition (iv) permits an appreciable fixed fraction; no vanishing-fraction rate is standing (the roles of a
vanishing fraction are set out under \emph{Regularity conditions} below). In (v), the clipping range is a
probability range for a binary $Y$, and both bounds are matched in the code. Under (iii), (v) and the standing
positivity below, the fitted influence function has a uniform envelope
$\|\varphi(\cdot;\widehat\eta)\|_\infty\le C_\varphi$: bounded $Y$, $\widehat Q\in[q_0,1-q_0]$ and truncated
$\widehat g\in[\delta,1-\delta]$ give $|H_g|\le\delta^{-1}$. The uniform score bound used throughout is therefore a
consequence of the estimator construction, not an assumption on the unclipped fit.

\paragraph{Repeated-sampling frameworks.} Two frameworks are used (Figure~\ref{fig:targets}), and they differ in what
is held fixed. In the
\emph{design-based} framework the finite population of clusters is fixed: its strata, PSUs, units, covariates,
treatments and outcomes. Each repetition redraws the first-stage PSU sample, the second-stage subsamples and the
fold assignment. Every design average and variance, including the target of the variance estimator
\eqref{eq:ucvar}, is computed over this randomness with the population held fixed. This framework determines the
distribution of the leading term $S_{\mathrm{des}}$ of the expansion \eqref{eq:onestep} below. In every regime
considered, the central limit theorem for $S_{\mathrm{des}}$ needs no model for how the population was generated.

In the \emph{joint model--design} framework the finite population is itself one draw from $\Pzcl$, regenerated at
each repetition before the sample is drawn. The fitted estimator's limit (Theorem~\ref{thm:clt}) holds in this joint
sense. It is centred at $\Psi_N$ and normed by the design variance, but the fitted-nuisance remainders $T_1,T_2$ are
controlled using the super-population layer: the fresh-complement step of Lemma~\ref{S:lem:crossfit} (Section~\ref{S:supp:lem-crossfit} of the Supplementary Materials) and the
$L^2(\Pz)$ rates. The further term $S_{\mathrm{mod}}$ (Lemma~\ref{S:lem:modterm}) is the separate bridge from the
census parameter to the super-population target $\Psi$. The simulation of Section~\ref{sec:sim} mirrors this split:
its oracle cells hold a few populations fixed and redraw many samples, and its fitted cells draw many populations
with a few samples each.

Condition~\ref{cond:he} collects the regularity used by the fixed-function design result.
In the many-strata regime, independent bounded strata yield the design CLT through the
across-strata Lindeberg--Feller theorem; in the fixed-strata and combined regimes,
the two-stage result is established here under SRS-WOR. No entropy condition is imposed on strata with bounded
$m_h$, including $m_h=2$. Broader fixed-function designs are discussed under that heading near the end of Section~\ref{S:supp:broader} of the Supplementary Materials.

\begin{assumption}[Fixed-function design regularity]\label{cond:he}
Second-order inclusion probabilities satisfy Sen--Yates--Grundy nonnegativity
$\pi_{kl}\le\pi_k\pi_l$ (standing). There are fixed constants $\bar F<\infty$ and $\bar m<\infty$, not depending
on $m$, such that for every $m$ at most $\bar F$ strata have $m_h>\bar m$ and all other strata have $m_h\le\bar m$.
The case in which no stratum has $m_h>\bar m$ is the \emph{many-strata} regime, the case $H\le\bar F$ the
\emph{fixed-strata} regime, and every other case the \emph{combined} regime (at most $\bar F$ larger strata
alongside strata with $m_h\le\bar m$).
\end{assumption}

\paragraph{Estimator and target.} The primary estimand is the \emph{finite-population} (design) center
$\Psi_N=\PN\varphi(\cdot;\eta_0)$, the census average of the AIPW estimating function at the true nuisances. The
\emph{super-population} functional $\Psi=\Pz\varphi(\cdot;\eta_0)$ is the size-weighted \emph{unit}-level mean under
the cluster-level law $\Pzcl$ (Condition~\ref{cond:mix}). It is recovered as the overall first-stage fraction
vanishes: $m/M_{\mathrm{pop}}\to0$ gives $\Psi_N-\Psi=\od_p(m^{-1/2})$ (Lemma~\ref{S:lem:modterm}). Both are identified
by an influence-function estimating function $\varphi$ with nuisance $\eta$. For the arm contrast (at $\Pz$, the
average-treatment-effect functional) with $O=(C,A,Y)$ and $\eta=(Q,g)$,
\begin{equation}\label{eq:phi}
\begin{aligned}
\Hg(A,C)&=\frac{A}{g(1\mid C)}-\frac{1-A}{1-g(1\mid C)},\\[2pt]
\varphi(O;\eta)&=\Hg(A,C)\{Y-Q(A,C)\}+Q(1,C)-Q(0,C),
\end{aligned}
\end{equation}
and $\D(O;\eta,\psi)=\varphi(O;\eta)-\psi$ is the estimating function, centred at $\psi=\Psi_N$ for the
design CLT and at $\psi=\Psi$ for the super-population reading. Here $Q(a,C)$ is the outcome regression and
$g(1\mid C)$ the propensity score. Their true values are $Q_0(a,C)=\E_{\Pz}(Y\mid A=a,C)$ and
$g_0(1\mid C)=\Pr_{\Pz}(A=1\mid C)$ under the size-weighted unit law $\Pz$ of Condition~\ref{cond:mix}. The two terms
of $\varphi$ are the inverse-probability-weighted residual and the outcome-model contrast; for $\Psi_N$ they give
\[
\Psi_N=\PN\{Q_0(1,C)-Q_0(0,C)\}+\PN[H_{g_0}(A,C)\{Y-Q_0(A,C)\}].
\]
This oracle-score census mean is the design center studied here; its residual component need not average to zero
in the realized population, and no identity with a realized finite-population average treatment effect is claimed.
At $\eta_0$, $\D$ is the
familiar AIPW score, the efficient influence function of the average treatment effect in the nonparametric model
for a single unit's data $O$. We use it only as an estimating function and make no efficiency claim under the survey
sampling model.

\noindent\textbf{Estimator.} The core estimator is the survey-weighted, cross-fitted \emph{augmented
inverse-probability-weighting / double-machine-learning} (AIPW/DML) plug-in of the influence function. Survey
weights enter the point estimate (Step~4), the PSU totals of the variance estimator (Step~5) and, for TMLE, the
targeting step:
\begin{enumerate}[leftmargin=*,label=\textbf{\arabic*.},itemsep=0pt,topsep=2pt,parsep=0pt]
\item \textbf{Split.} Partition whole PSUs into a \emph{fixed} number $V\ge2$ of folds (not depending on $m$) by a
uniform, data-independent balanced partition within each stratum, independent across strata, with equal
within-stratum shares $m_{v,h}=m_h/V$ ($V\mid m_h$, so the folds are exactly balanced;
Figure~\ref{fig:folds-variances}a); the
partition and the learner's internal randomness are exogenous to the data and to second-stage sampling (the fold
seed and held-out labels are not inputs to $\widehat\eta_{-v}$).
\item \textbf{Fit.} For fold $v$, estimate initial $Q$ and $g$ using only PSUs outside that fold, so
$\widehat\eta_{-v}$ is training-field measurable. This step does not prescribe whether the learner uses the
survey weights; the learners in the simulation (\S\ref{sec:sim}) are fitted without them and those in the
real-data illustration (\S\ref{sec:app}) with them.
\item \textbf{Predict.} Evaluate the nuisances on held-out PSUs and truncate $\widehat g$ two-sidedly to
$[\delta,1-\delta]$ and $\widehat Q$ to $[q_0,1-q_0]$ (condition (v)).
\item \textbf{Estimate.} Average the cross-fitted influence function with the survey weights,
$\psihat=\Pn\varphi(\cdot;\widehat\eta_{\mathrm{cf}})$, where on fold $v$ the nuisance is the held-out initial fit,
$\widehat\eta_{\mathrm{cf}}^{(v)}=\widehat\eta_{-v}$; equivalently $\Pn\D(\cdot;\widehat\eta_{\mathrm{cf}},\psihat)=0$.
\item \textbf{Infer.} Form the survey-weighted PSU-level totals of the centred fitted influence function,
$z_{hj}=\Nhat^{-1}\sum_{i\in(h,j)}w_i\{\varphi(O_i;\widehat\eta_{\mathrm{cf}})-\psihat\}$ for PSU $j$ of stratum $h$,
and compute the stratified ultimate-cluster variance
$\widehat{\mathrm{se}}_w^2=\sum_h\{m_h/(m_h-1)\}\sum_{j=1}^{m_h}(z_{hj}-\bar z_h)^2$, with $\bar z_h$ the stratum mean
of the $z_{hj}$. This is the ultimate-cluster estimator \eqref{eq:ucvar} below, computed without a
finite-population correction; Lemma~\ref{S:lem:designclt} establishes its properties.
\end{enumerate}
\emph{TMLE variant.} The cross-fitted TMLE replaces the held-out initial fit on each fold by one \emph{pooled},
low-dimensional ($\dim\le2$) \emph{logistic} targeting fluctuation of $Q$ on the logit scale,
$\widehat Q^\star=\mathrm{expit}\{\mathrm{logit}\,\widehat Q_{-v}+\widehat\epsilon^{\top}H\}$. The fluctuation is
common to all folds and holds the cross-fitted $g$ fixed. Its parameter $\widehat\epsilon$ solves the pooled
survey-weighted score equation $\Pn[H(Y-\widehat Q_\epsilon)]=0$, and $\widehat\eta_{\mathrm{cf}}^{(v)}$ is
$\widehat\eta_{-v}$ so fluctuated. Because the initial $\widehat Q_{-v}$ is clipped to $[q_0,1-q_0]$ (condition
(v)), the fluctuation keeps $\widehat Q^\star$ inside $(0,1)$ smoothly. No hard clip, which would zero the score
derivative, is applied after targeting; the uniform bound this needs is proved with Corollary~\ref{cor:tmle}
(Section~\ref{S:supp:outline-cor}). Theorem~\ref{thm:clt} is stated for the AIPW/DML plug-in, and the
complement-stability property below governs its out-of-fold \emph{initial} score $f_v$. The shared fluctuation,
absent for the plug-in ($\widehat\epsilon=0$), is not training-field measurable, so that property does \emph{not}
cover it. Corollary~\ref{cor:tmle} treats it as a finite-dimensional perturbation and shows that the cross-fitted
TMLE inherits the same limit under a small block of targeting conditions.

\paragraph{One-step expansion.} The proof rests on an exact decomposition of the estimation error. Adding and
subtracting $\varphi(\cdot;\eta_0)$ and population means in
$\psihat-\Psi_N=\Pn\varphi(\cdot;\widehat\eta_{\mathrm{cf}})-\PN\varphi(\cdot;\eta_0)$ gives the exact one-step
(von Mises) expansion about the design center,
\begin{equation}\label{eq:onestep}
\psihat-\Psi_N=\underbrace{(\Pn-\PN)\D(\cdot;\eta_0,\Psi_N)}_{S_{\mathrm{des}}}+T_1+T_2,
\end{equation}
with empirical-process term $T_1=(\Pn-\Pz)\{\varphi(\cdot;\widehat\eta_{\mathrm{cf}})-\varphi(\cdot;\eta_0)\}$ in
the fitted-minus-true nuisances and second-order remainder
$T_2=\Pz\{\varphi(\cdot;\widehat\eta_{\mathrm{cf}})-\varphi(\cdot;\eta_0)\}$, the latter's first-order nuisance
derivative vanishing by the Neyman orthogonality of $\varphi$ (Lemma~\ref{S:lem:prodrate} in Section~\ref{S:supp:lem-prodrate} of the Supplementary Materials). Because the fitted
function differs across folds, $\Pz$ of it means the fold-weighted sum
$\sum_v\omega_v\Pz\varphi(\cdot;\widehat\eta_{\mathrm{cf}}^{(v)})$, with $\omega_v=\widehat N_v/\Nhat$ the realized
share of the weight total in fold $v$ (not $1/V$); with this reading \eqref{eq:onestep} is exact. The leading
\emph{design} term $S_{\mathrm{des}}=(\Pn-\PN)\varphi(\cdot;\eta_0)$ has the difference operator annihilate the
constant ($\Pn\{1\}=\PN\{1\}=1$). The bridge to the super-population target is the single \emph{model} term,
\begin{equation}\label{eq:Ssplit}
\Psi_N-\Psi=S_{\mathrm{mod}}:=(\PN-\Pz)\D(\cdot;\eta_0,\Psi),\qquad
\psihat-\Psi=S_{\mathrm{des}}+S_{\mathrm{mod}}+T_1+T_2 .
\end{equation}
Thus the proof has four jobs: a design CLT for $S_{\mathrm{des}}$, cross-fitting control of $T_1$, a
product-rate bound for $T_2$, and --- only for the super-population reading --- negligibility of
$S_{\mathrm{mod}}$.

\paragraph{Two variance scales.} Two variance quantities for the leading term $S_{\mathrm{des}}$, more precisely
for its linearization (Lemma~\ref{S:lem:designclt}), are used throughout. The first, $\sigma_{m,\mathrm{wor}}^2$, is
its without-replacement design variance: the variance over repeated first- and second-stage sampling, including the
first-stage finite-population correction. It norms the theorem's normal limit. The second, $\sigma_m^2$, is the
with-replacement variance targeted by the ultimate-cluster variance estimator \eqref{eq:ucvar} of Step~5. That
estimator treats the sampled PSUs of a stratum as if they were drawn with replacement, and so omits the correction. Under SRS-WOR both split by stratum
into a between-PSU component $B_h$ and a within-PSU component $V_h$
(\proofloc; the fixed-total decomposition is classical, \citealp[Corollary~2.4.3]{Wolter2007}):
\[
\sigma_m^2=\sum_h(B_h+V_h),\qquad \sigma_{m,\mathrm{wor}}^2=\sum_h\{(1-f_h)B_h+V_h\}.
\]
Their gap $\Delta_m=\sigma_m^2-\sigma_{m,\mathrm{wor}}^2=\sum_h f_hB_h\ge0$ is the omitted correction. It is why the
usual interval can be conservative at an appreciable fraction. Whether that interval is asymptotically exact depends
on the \emph{relative} gap $\Delta_m/\sigma_m^2$ (Theorem~\ref{thm:clt}). Figure~\ref{fig:folds-variances}b shows that
the omitted correction acts only on the between-PSU components; the within-PSU components are the same on both
scales. This variance-calibration question is distinct from the target discrepancy $\Psi_N-\Psi$ of
Figure~\ref{fig:targets}.
For a fixed bounded function $f$, the ultimate-cluster estimator is
\begin{equation}\label{eq:ucvar}
\begin{gathered}
\widehat{\mathrm{se}}^2=\sum_{h}\frac{m_h}{m_h-1}\sum_{j=1}^{m_h}(z_{hj}-\bar z_{h})^2, \\
z_{hj}=\Nhat^{-1}\!\!\sum_{i\in(h,j)}\!\! w_i\{f(O_i)-\Pn f\},\quad \bar z_h=\tfrac1{m_h}\sum_{j} z_{hj},
\end{gathered}
\end{equation}
in which the observed (second-stage Horvitz--Thompson) weighted PSU totals $z_{hj}$ carry the within-PSU subsampling
variance. The target $\sigma_m^2$ is the design expectation of the census-referenced version of \eqref{eq:ucvar}, in
which the $z_{hj}$ use the census $N$ and $\PN f$ in place of $\Nhat$ and $\Pn f$ (\proofloc). Under the
fixed-function design conditions of Lemma~\ref{S:lem:designclt}, the reported $\widehat{\mathrm{se}}^2$ differs from
that version by $\od_p(\sigma_m^2)$.

\begin{figure}[tbp]
\footnotesize
\centering
\includegraphics{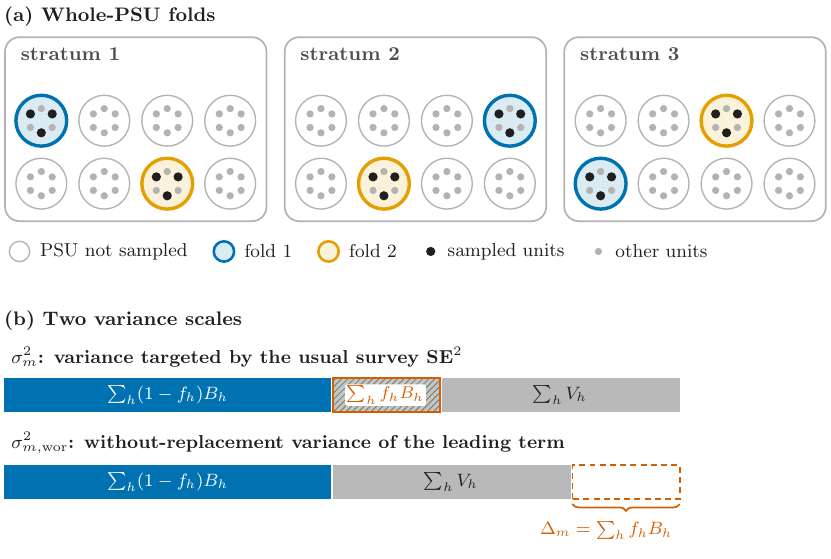}
\caption{Whole-PSU cross-fitting and the two variance scales. (a) Three strata are illustrated, each containing
eight PSUs, of which two are sampled without replacement. Sampled PSUs are assigned intact to two folds, balanced
within strata; dark dots denote sampled units and grey dots other units. The initial nuisance models for each fold
are fitted on the sampled PSUs of the other fold and evaluated on the held-out fold, and the roles are then
exchanged; the TMLE's subsequent targeting fluctuation is pooled across folds. Two folds are illustrated;
Theorem~\ref{thm:clt} allows any fixed number $V\ge2$ with $V\mid m_h$. (b) For the linearized leading design term,
$\sigma_m^2=\sum_h(B_h+V_h)$ is the target of the ultimate-cluster variance estimator \eqref{eq:ucvar}, whereas
$\swor^2=\sum_h\{(1-f_h)B_h+V_h\}$ is the without-replacement design variance. Their difference,
$\Delta_m=\sum_hf_hB_h$, affects only the between-PSU components; the within-PSU components are unchanged. Bar
lengths are schematic, drawn with $f_h=0.25$.}
\label{fig:folds-variances}
\par\smallskip
{\scriptsize\raggedright\noindent\textbf{Alt text:} (a) Three strata of eight PSUs each, drawn as circles holding six
units. In each stratum two PSUs are sampled, one in fold 1 and one in fold 2, and three of the six units of each
sampled PSU are sampled. (b) Two horizontal bars. The upper bar, $\sigma_m^2$, has a between-PSU part
$\sum_h(1-f_h)B_h$, a hatched part $\sum_hf_hB_h$ and a within-PSU part $\sum_hV_h$; the lower bar, $\swor^2$, lacks
the hatched part and is shorter by $\Delta_m$.\par}
\end{figure}

\paragraph{Regularity conditions.} Let $f_v=\varphi(\cdot;\widehat\eta_{-v})-\varphi(\cdot;\eta_0)$ be the
\emph{single realized} fold-$v$ score, where $\widehat\eta_{-v}$ is the out-of-fold \emph{initial} nuisance fit.
Its error is bounded by a deterministic sequence $r_m\to0$: $\|\widehat\eta_{-v}-\eta_0\|_{L^2(\Pz)}\le r_m$ with
probability tending to one, where
$\|\widehat\eta_{-v}-\eta_0\|_{L^2(\Pz)}=\max\{\|\widehat Q_{-v}-Q_0\|_{\Pz},\|\widehat g_{-v}-g_0\|_{\Pz}\}$ and
the outcome norm is joint over both arms, as in Lemma~\ref{S:lem:prodrate}. The sequence $r_m$ bounds the realized
out-of-fold error; it is not that error itself.

Two rate conditions are used. Lemma~\ref{S:lem:crossfit}, for $T_1$, needs only $r_m\to0$.
Lemma~\ref{S:lem:prodrate} bounds $T_2$ by the product of the outcome and propensity errors.
Theorem~\ref{thm:clt} and Corollary~\ref{cor:tmle} assume that each error is $\od_p(m^{-1/4})$,
a sufficient condition for their product to be $\od_p(m^{-1/2})$.
Under these separate rates, $r_m$ can be chosen with $r_m=\od(m^{-1/4})$, as the proof of
Corollary~\ref{cor:tmle} uses (\proofloc\ explains this choice).

No first-stage-fraction rate is standing. The design CLT about $\Psi_N$ and the cross-fitting control of $T_1$
(Lemma~\ref{S:lem:crossfit}) hold whether the fraction is appreciable or vanishing. A vanishing overall fraction
$m/M_{\mathrm{pop}}\to0$ enters in two places, by \emph{distinct} mechanisms. For the \emph{super-population} target
$\Psi$, it makes $S_{\mathrm{mod}}$ negligible (Lemma~\ref{S:lem:modterm}, Condition~\ref{cond:nd}). The norming
calibration works differently: it holds \emph{iff} $\Delta_m/\sigma_m^2\to0$. For that, $\max_h f_h\to0$ and, under
bounded PSU totals, $m/M_{\mathrm{pop}}\to0$ are each \emph{sufficient but not necessary}, and $S_{\mathrm{mod}}$ plays
no part.

\paragraph{Complement stability (intermediate property).} Cross-fitting works because each held-out PSU's score
is a fixed function of the other folds. What must be shown is that, on the PSUs the learner did not see, this
fixed function behaves as it would under the super-population law; complement stability states this precisely.
For a fold-$v$ score $g$ let $\theta_v(g)$ be the
\emph{size-weighted poststratified complement mean}
\[
\theta_v(g)=\frac{\sum_h\rho_h\,\bar t_h^{(-v)}(g)}{\sum_h\rho_h\,\bar n_h^{(-v)}},\qquad
\rho_h=\frac{M_{\mathrm{pop},h}}{M_{\mathrm{pop}}},
\]
the population-weighted \emph{ratio} of the complement-frame averages $\bar t_h^{(-v)}(g)$ and $\bar n_h^{(-v)}$
of the cluster totals $\sum_{i\in k}g(O_{ki})$ and sizes $n_k$, taken over each stratum's frame minus the
fold-$v$ training PSUs. It is the finite-population complement analogue of the size-weighted unit law $\Pz$
(Condition~\ref{cond:mix}), reducing to a per-PSU average of cluster totals only when cluster sizes are
non-informative. The cross-fitting argument turns on the \emph{complement-stability} property: conditionally on
the fold-$v$ training field $\mathcal T_v$, with probability $\to1$ the realized initial score $f_v$ obeys
\begin{equation}\label{eq:cs}
|\theta_v(f_v)-\Pz f_v|=\od_p(m^{-1/2}),\qquad
\theta_v(f_v^2)=O_p(r_m^2)+\od_p(m^{-1/2}).
\end{equation}
Stating the property on the \emph{complement} frame, where $f_v$ is out of sample, is what removes any
first-stage-fraction rate. A memorizing or interpolating learner can inflate the full-census $\PN f_v^2$ and
$(\PN-\Pz)f_v$ to order $\max_h f_h$ through its own training PSUs, but the complement excludes those PSUs.
\emph{This property is not assumed:} Lemma~\ref{S:lem:crossfit} \emph{proves} \eqref{eq:cs}, in its fresh-complement
step, from the stratified-mixture model (Condition~\ref{cond:mix}), training-only whole-PSU cross-fitting, the
bounded envelope and the out-of-fold $L^2(\Pz)$ rate $r_m\to0$. It constrains \emph{one} fixed
$\mathcal T_v$-measurable function, not a class over the nuisance model, so it is not a Donsker or entropy
condition. The CLT therefore covers cross-fitted learners for which such a condition cannot be verified, and none is
imposed.

Condition~\ref{cond:nd} rules out designs whose leading design variance is not of order $1/m$.

\begin{assumption}[Design nondegeneracy]\label{cond:nd}
$m\sigma_{m,\mathrm{wor}}^2\to\bar\sigma^2\in(0,\infty)$, where $\sigma_{m,\mathrm{wor}}^2$ is the
without-replacement design variance of the \emph{linearized} leading term $L$ of
Lemma~\ref{S:lem:designclt} (the H\'ajek difference $S_{\mathrm{des}}$ differs from $L$ by $O_p(m^{-1})$
and carries the same norming); the ultimate-cluster
(with-replacement) variance target $\sigma_m^2$ of Lemma~\ref{S:lem:designclt} has $m\sigma_m^2$ likewise bounded away
from $0$ and $\infty$
(equal to $\bar\sigma^2$ as the relative gap $\sum_h f_h B_h/\sigma_m^2\to0$, e.g.\ as $\max_h f_h\to0$), for
$\Pzcl$-almost every population sequence.
\end{assumption}

Condition~\ref{cond:mix} is the model layer. It connects the finite-population center to the super-population
target, and it controls the fitted-nuisance remainders $T_1,T_2$ through the fresh-complement step of
Lemma~\ref{S:lem:crossfit} and the $L^2(\Pz)$ product rate. The leading design CLT does not use it in any regime. In
words, clusters within a stratum are independent draws from a common stratum-specific law.

\begin{assumption}[Stratified-mixture population / aggregate-centering]\label{cond:mix}
The population clusters are drawn independently, identically distributed \emph{within} each stratum $h$ from a
stratum \emph{cluster}-level law $P_h$, and the super-population cluster law is the stratified mixture
$\Pzcl=\sum_h\rho_h P_h$ at the design stratum proportions $\rho_h=M_{\mathrm{pop},h}/M_{\mathrm{pop}}$. With
$\mu_h(f)=\E_{P_h}\{\sum_{i\in k}f(O_{ki})\}$ the expected within-cluster total of $f$ under $P_h$, the induced
size-weighted \emph{unit} law is $\Pz f=\{\sum_h\rho_h\mu_h(f)\}/\{\sum_h\rho_h\mu_h(1)\}$ (size-biased under
informative cluster sizes), so the unit-level target is $\Psi=\Pz\varphi(\cdot;\eta_0)$ and the $L^2(\Pz)$ norms
of the complement-stability property \eqref{eq:cs}, Condition~\ref{cond:nd}, and Lemma~\ref{S:lem:prodrate} are
taken under it. A function is
aggregate-centered exactly when $\sum_h\rho_h\mu_h(g)=0$, in which case $\Pz g=0$; in particular
$g=\D(\cdot;\eta_0,\Psi)$ is aggregate-centered.
\end{assumption}

The proportions $\rho_h$, and in the many-strata regime the strata themselves, change with $m$, so $\Pzcl$, $\Pz$,
$\eta_0$ and $\Psi$ are indexed by $m$; we suppress the index. Aggregate centering is imposed at the finite-$m$
proportions $\rho_h$, as the proof of Lemma~\ref{S:lem:modterm} requires.

\paragraph{Scope in brief.} The restrictions that matter for use are these. The first stage is SRS-WOR within
strata (Theorem~\ref{thm:clt}; parts of the fixed-function design result are broader). The first-stage fraction
may be appreciable; the ultimate-cluster interval is then asymptotically conservative. At appreciable fractions we
retain the finite-population target $\Psi_N$; a vanishing overall fraction, $m/M_{\mathrm{pop}}\to0$, is sufficient
to replace it by $\Psi$. Folds are a fixed number $V$ of whole-PSU groups balanced within each stratum
($V\mid m_h$). The out-of-fold nuisance errors must vanish in $L^2(\Pz)$ for $T_1$.
For $T_2$, Lemma~\ref{S:lem:prodrate} uses their product; the separate $\od_p(m^{-1/4})$ rates
assumed here are a convenient sufficient condition. Positivity holds with the clipping and
truncation of (v). The leading design CLT needs no model in any regime. The stratified-mixture model is used for
the nuisance errors and the bridge to $\Psi$, so the limit of the fitted estimator is a joint model--design
statement.

\section{Main results}\label{sec:main}

The theorem has three parts: a normal limit for the estimator about $\Psi_N$, consistency of the ultimate-cluster
variance estimator for its with-replacement target $\sigma_m^2$, and the relation between the two normings, which
decides whether the resulting interval is conservative or exact.

\begin{theorem}[Cross-fitted cluster CLT and conservative inference]\label{thm:clt}
Assume the conditions of \S\ref{sec:setup}, Conditions~\ref{cond:he}--\ref{cond:mix}, \emph{first-stage PSU
selection by simple random sampling without replacement} (allocation across strata satisfying the standing
conditions; first-stage fractions bounded away from one), PSU-level cross-fitting on a \emph{fixed} number $V\ge2$
of whole-PSU folds (not depending on $m$) with equal within-stratum shares $m_{v,h}=m_h/V$ ($V\mid m_h$), and the
separate nuisance rates of Lemma~\ref{S:lem:prodrate} (each out-of-fold outcome and propensity error
$\od_p(m^{-1/4})$). Then, for the cross-fitted
AIPW/DML plug-in $\psihat=\Pn\varphi(\cdot;\widehat\eta_{\mathrm{cf}})$ (Step~4, $\widehat\eta_{\mathrm{cf}}^{(v)}=
\widehat\eta_{-v}$, no targeting) with influence function $\D$:

\noindent\emph{Point CLT (design center).}
\[
\frac{\psihat-\Psi_N}{\sigma_{m,\mathrm{wor}}}\xrightarrow{d}\mathcal N(0,1),
\qquad \Psi_N=\PN\varphi(\cdot;\eta_0),
\]
under these conditions; if further $m/M_{\mathrm{pop}}\to0$ then $S_{\mathrm{mod}}=\Psi_N-\Psi=
\od_p(\sigma_{m,\mathrm{wor}})$ (Lemma~\ref{S:lem:modterm}) and the same holds about the super-population $\Psi$.

\noindent\emph{Variance estimation.} With $\widehat{\mathrm{se}}_w^2$ the estimator \eqref{eq:ucvar} computed on
$\widehat\D_i=\D(O_i;\widehat\eta_{\mathrm{cf}},\psihat)$, $\widehat{\mathrm{se}}_w^2/\sigma_m^2\xrightarrow{p}1$.

\noindent\emph{Conservativeness and calibration.} Under SRS-WOR, $\sigma_{m,\mathrm{wor}}^2\le\sigma_m^2$ exactly
(Lemma~\ref{S:lem:designclt}), with gap $\Delta_m:=\sigma_m^2-\sigma_{m,\mathrm{wor}}^2=\sum_h f_h B_h\ge0$, where
$B_h$ is the between-PSU component of the stratum-$h$ variance target. The Wald interval is \emph{asymptotically at
least nominal}: for each nominal level $1-\gamma$,
$\liminf_m\Pr(\Psi_N\in\psihat\pm z_{1-\gamma/2}\widehat{\mathrm{se}}_w)\ge1-\gamma$ under the joint law. The
normings calibrate, $\sigma_{m,\mathrm{wor}}/\sigma_m\to1$, iff the \emph{relative} gap $\Delta_m/\sigma_m^2\to0$, in
probability under the joint law, since the gap depends on the generated population. Under this condition
$(\psihat-\Psi_N)/\widehat{\mathrm{se}}_w\xrightarrow{d}\mathcal N(0,1)$, so the ultimate-cluster interval is
asymptotically exact about the \emph{design center}. If $m/M_{\mathrm{pop}}\to0$, then $S_{\mathrm{mod}}$ is
negligible and, because PSU totals are bounded under the standing conditions, the normings calibrate; hence
$(\psihat-\Psi)/\widehat{\mathrm{se}}_w\xrightarrow{d}\mathcal N(0,1)$ about the super-population target as well.
\end{theorem}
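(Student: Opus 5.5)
The proof will follow the four jobs laid out after \eqref{eq:Ssplit}, and the whole argument runs from the exact expansion \eqref{eq:onestep}. \emph{Step 1 (leading term).} Apply Lemma~\ref{S:lem:designclt} to the fixed function $f=\D(\cdot;\eta_0,\Psi_N)$ with the population held fixed. This gives $S_{\mathrm{des}}=L+O_p(m^{-1})$ and $L/\sigma_{m,\mathrm{wor}}\xrightarrow{d}\mathcal N(0,1)$. The lemma covers all three regimes of Condition~\ref{cond:he}: Lindeberg--Feller across strata in the many-strata regime, and the SRS-WOR two-stage result for the $\bar F$ large strata, combined by independence of strata. It also gives the exact identity $\sigma_m^2-\sigma_{m,\mathrm{wor}}^2=\sum_hf_hB_h\ge0$. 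Condition~\ref{cond:nd} makes the $O_p(m^{-1})$ error $\od_p(\sigma_{m,\mathrm{wor}})$. Since this convergence holds conditionally for $\Pzcl$-almost every population sequence, dominated convergence of conditional characteristic functions transfers it to the joint law.

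\emph{Step 2 (remainders).} For $T_2$, invoke Lemma~\ref{S:lem:prodrate}. Its bound is the fold-weighted sum $\sum_v\omega_v$ of outcome-error times propensity-error products, each $\od_p(m^{-1/4})\cdot\od_p(m^{-1/4})$. This gives $T_2=\od_p(m^{-1/2})=\od_p(\sigma_{m,\mathrm{wor}})$, using $\omega_v\le1$ and $V$ fixed. For $T_1$, write $T_1=\sum_v\omega_v(\Pn^{(v)}-\Pz)f_v$ and condition on the training field $\mathcal T_v$, so that $f_v$ is a fixed bounded function. Then split
\[
(\Pn^{(v)}-\Pz)f_v=(\Pn^{(v)}-\theta_v)f_v+(\theta_v(f_v)-\Pz f_v).
\]
The second piece is $\od_p(m^{-1/2})$ by the first half of \eqref{eq:cs}, which Lemma~\ref{S:lem:crossfit} proves from Condition~\ref{cond:mix}. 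Given $\mathcal T_v$, fold $v$'s PSUs are an SRS-WOR of each stratum's complement frame, so the first piece is conditionally centred up to ratio linearization. Its conditional variance is $O(m^{-1}\theta_v(f_v^2))$ by the variance bound of Lemma~\ref{S:lem:designclt} applied on the complement frame, which is $\od_p(m^{-1})$ by the second half of \eqref{eq:cs} with $r_m\to0$. Chebyshev's inequality conditionally, followed by the bounded envelope and dominated convergence, then gives $T_1=\od_p(m^{-1/2})$. Slutsky's lemma yields the point CLT. Under $m/M_{\mathrm{pop}}\to0$, Lemma~\ref{S:lem:modterm} adds $S_{\mathrm{mod}}=\od_p(m^{-1/2})$.

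\emph{Step 3 (variance).} Write $\widehat\D_i=\D(O_i;\eta_0,\Psi_N)+\{f_{v(i)}(O_i)\}-(\psihat-\Psi_N)$. Lemma~\ref{S:lem:designclt} gives $\widehat{\mathrm{se}}^2/\sigma_m^2\to_p1$ for the fixed-function part. By Minkowski's inequality on the PSU-total sums, the perturbation contributes at most $C\sum_{h,j}\tilde z_{hj}^2$, where $\tilde z_{hj}$ are the weighted PSU totals of $f_v$ and of the constant shift. Using bounded weights and cluster sizes, $\sum_{h,j}\tilde z_{hj}^2\lesssim m^{-1}\Pn f_v^2+(\psihat-\Psi_N)^2$. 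The conditional expectation of $\Pn^{(v)}f_v^2$ is $\theta_v(f_v^2)$ up to ratio error, which is $o_p(1)$, and $(\psihat-\Psi_N)^2=O_p(m^{-1})$ carries an extra $m^{-1}$ factor through the $\Nhat^{-2}$ scaling. Both terms are therefore $\od_p(\sigma_m^2)$, because $m\sigma_m^2$ is bounded below by Condition~\ref{cond:nd}. Cross terms are handled by Cauchy--Schwarz.

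\emph{Step 4 (coverage).} Write $(\psihat-\Psi_N)/\widehat{\mathrm{se}}_w=Z_m\cdot\sigma_{m,\mathrm{wor}}/\sigma_m\cdot(1+\od_p(1))$ with $Z_m\to\mathcal N(0,1)$ and $\sigma_{m,\mathrm{wor}}/\sigma_m\le1$. Along any subsequence, extract a further subsequence on which the ratio converges in distribution to some $R\in[0,1]$. Coverage then tends to $\E\,\Pr(|Z|\le z/R)\ge1-\gamma$, so the $\liminf$ bound holds. The ratio tends to $1$ iff $\Delta_m/\sigma_m^2\to0$, by direct algebra on $\sigma_{m,\mathrm{wor}}^2/\sigma_m^2=1-\Delta_m/\sigma_m^2$, and in that case the studentized limit is exact. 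Under $m/M_{\mathrm{pop}}\to0$ with bounded PSU totals, $B_h$ is bounded by the stratum second moment, so $\Delta_m\le\sum_h f_hB_h\lesssim (m/M_{\mathrm{pop}})\cdot m^{-1}\cdot O(1)$ after weighting. Hence $\Delta_m/\sigma_m^2\to0$. I expect this step to need care under unequal allocation, possibly by bounding $B_h\lesssim \rho_h^2/m_h$ via $w\le C_wN/m$.

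The main obstacle is the $T_1$ step. The difficulty is justifying that, conditionally on $\mathcal T_v$, the held-out fold is an SRS-WOR from the complement frame independent of $f_v$. Fold balance and data-independence of the partition make this true at the first stage, but the H\'ajek normalisation $\Nhat$ mixes folds. I would first linearise $\Nhat$ and $\widehat N_v$ around $N$ and $N_v$ using Lemma~\ref{S:lem:envelope}, and only then apply the complement-frame variance bound.
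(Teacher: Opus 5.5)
Your proposal is correct and follows the paper's own assembly of Theorem~\ref{thm:clt}. It uses the exact expansion \eqref{eq:onestep}, Lemma~\ref{S:lem:designclt} for $S_{\mathrm{des}}$ and the fixed-function variance, and the poststratified-complement centring of the held-out fold together with the complement-stability bounds \eqref{eq:cs} of Lemma~\ref{S:lem:crossfit} for $T_1$; it then adds the product bound of Lemma~\ref{S:lem:prodrate} for $T_2$, a perturbation bound carrying variance consistency over to the fitted score, and Lemma~\ref{S:lem:modterm} for the bridge to $\Psi$. The two points you flag as needing care are already settled. Your fold-wise form $T_1=\sum_v\omega_v({\Pn}^{(v)}-\Pz)f_v$, with $\omega_v\in[0,1]$, involves only fold-$v$ quantities in each ratio, so there is no cross-fold H\'ajek mixing. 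Bounded PSU totals give $f_hB_h\lesssim M_{\mathrm{pop},h}/N^2$ stratum by stratum, so $\Delta_m\lesssim 1/M_{\mathrm{pop}}$ under any allocation, and the coverage bound follows directly from $\swor\le\sigma_m$ without extracting subsequences.
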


\noindent\textit{Proof.} Section~\ref{S:supp:outline-thm} of the Supplementary Materials assembles the proof from
Lemmas~\ref{S:lem:designclt}--\ref{S:lem:crossfit}, which Section~\ref{S:supp:proofs} states and proves.

The distributional limit is under the \emph{joint model--design} law. The design, re-drawing PSUs from the fixed
finite population, delivers the $S_{\mathrm{des}}$ central limit theorem free of any cluster-versus-unit model. The
nuisance-error terms $T_1,T_2$ are made $\od_p(\sigma_{m,\mathrm{wor}})$ using the super-population $L^2(\Pz)$
consistency ($r_m\to0$) and Lemma~\ref{S:lem:prodrate}, that is, through the model layer of
Condition~\ref{cond:mix}.

\noindent\emph{What this means for an analyst.} Under SRS-WOR of PSUs, the ultimate-cluster standard error,
computed from whole-PSU folds and without a finite-population correction, gives an interval that is asymptotically
at least nominal for $\Psi_N$. Its excess coverage is governed by the relative gap $\Delta_m/\sigma_m^2$, that is,
by the first-stage fractions $f_h$ weighted by the between-PSU variance components $B_h$. When the overall fraction
$m/M_{\mathrm{pop}}$ vanishes (with bounded PSU totals), the same interval is asymptotically exact for both $\Psi_N$
and $\Psi$; this holds in particular when all stratum fractions vanish.
Across the simulated populations and fractions, the mean gap $\Delta_m$ was close to the
across-population variance of $\Psi_N-\Psi$ (Table~\ref{S:tab:S-vanish}).
This suggests that the extra interval width that makes coverage conservative for $\Psi_N$
is roughly the additional width an interval for $\Psi$ would need.
The paper does not derive this relation or establish fixed-fraction coverage about $\Psi$.
The theorem does not say how large $m$ must be for the remainders to be negligible; Section~\ref{sec:sim} examines
that question empirically.

\begin{corollary}[Cross-fitted survey-weighted TMLE]\label{cor:tmle}
Consider the \emph{TMLE variant} of \S\ref{sec:setup}: the held-out initial fit $\widehat\eta_{-v}$ composed with
a single \emph{pooled} logistic fluctuation $\widehat Q^\star=\mathrm{expit}\{\mathrm{logit}\,\widehat Q_{-v}
+\widehat\epsilon^{\top}H\}$ common to all folds, holding the cross-fitted $\widehat g$ fixed. Assume, beyond the
hypotheses of Theorem~\ref{thm:clt}, the \emph{targeting conditions}: \emph{(T1)} the clever covariate $H$, built
on each fold from that fold's propensity $\widehat g_{-v}$, is bounded and the fluctuation dimension is fixed
($\dim\le2$); \emph{(T2)} the pooled weighted Gram matrix $G_n=\Pn[HH^{\top}]$, with each unit evaluated at its own
fold's $H$, has $\lambda_{\min}(G_n)\ge c_H>0$ with probability $\to1$; \emph{(T3)} the fluctuation is
the smooth logistic tilt above of the \emph{clipped} initial $\widehat Q_{-v}\in[q_0,1-q_0]$ (condition (v)), with
no post-fluctuation hard clip. Then the cross-fitted TMLE $\widehat\psi_{\mathrm{TMLE}}=\Pn\varphi(\cdot;
\widehat\eta_{\mathrm{cf}})$ satisfies the same Point CLT, variance-consistency, conservativeness, and calibration
as in Theorem~\ref{thm:clt}.
\end{corollary}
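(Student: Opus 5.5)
The approach is to reduce the TMLE to the AIPW/DML plug-in of Theorem~\ref{thm:clt} plus a finite-dimensional perturbation. On fold $v$ write $\widehat Q^\star_{v}=\mathrm{expit}\{\mathrm{logit}\,\widehat Q_{-v}+\widehat\epsilon^{\top}H_v\}$, and let $\widehat\eta^\star_{\mathrm{cf}}$ denote the fluctuated nuisance. The decomposition to aim for is
\[
\widehat\psi_{\mathrm{TMLE}}-\psihat=\Pn\{\varphi(\cdot;\widehat\eta^\star_{\mathrm{cf}})-\varphi(\cdot;\widehat\eta_{\mathrm{cf}})\}.
\]
The plan is to show this difference is $\od_p(m^{-1/2})$, hence $\od_p(\sigma_{m,\mathrm{wor}})$ by Condition~\ref{cond:nd}. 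Once that holds, the Point CLT, conservativeness and calibration parts transfer verbatim from Theorem~\ref{thm:clt} by Slutsky. Variance consistency is handled separately in the last step.

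\textbf{Step 1 (rate for $\widehat\epsilon$).} I first bound the fitted parameter. The initial fit is clipped into $[q_0,1-q_0]$, so $\mathrm{logit}\,\widehat Q_{-v}$ is bounded, and by (T1) so is $H$. The pooled score $U_n(\epsilon)=\Pn[H(Y-\widehat Q_\epsilon)]$ then has derivative $-\Pn[HH^{\top}\widehat Q_\epsilon(1-\widehat Q_\epsilon)]$. On any bounded $\epsilon$-ball this derivative is uniformly negative definite by (T2), and no post-fluctuation clip interferes by (T3). Concavity of the logistic log-likelihood then gives a unique root and $|\widehat\epsilon|\lesssim|U_n(0)|/c_H$.

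Next I bound $U_n(0)=\Pn[H(Y-\widehat Q_{\mathrm{cf}})]$. Here $H$ is built from each fold's own $\widehat g_{-v}$, so $U_n(0)$ is a cross-fitted average of a training-field-measurable function per fold. I decompose it exactly as in \eqref{eq:onestep}:
\begin{itemize}[leftmargin=*,itemsep=0pt,topsep=2pt]
\item a fixed-function design term at the true nuisances, which is $O_p(m^{-1/2})$ by Lemma~\ref{S:lem:designclt};
\item a cross-fitting term, which is $\od_p(m^{-1/2})$ by Lemma~\ref{S:lem:crossfit} applied to the score $H_{\widehat g_{-v}}(Y-\widehat Q_{-v})-H_{g_0}(Y-Q_0)$, which has envelope and $L^2(\Pz)$ error $O(r_m)$;
\item a population term $\Pz[H_{\widehat g_{-v}}(Q_0-\widehat Q_{-v})]$, which is $O_p(r_m)$.
\end{itemize}
This yields $\widehat\epsilon=O_p(r_m+m^{-1/2})$. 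With $r_m=\od(m^{-1/4})$, as noted after the regularity conditions, $\widehat\epsilon=\od_p(m^{-1/4})$.

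\textbf{Step 2 (expansion of the perturbation).} Taylor expand $\varphi$ in $\epsilon$ at $0$. The derivative of $\varphi(\cdot;\widehat\eta_\epsilon)$ is $\{H_{\widehat g}\,\partial_\epsilon\widehat Q_\epsilon(A,C)\cdots\}$, so
\[
\varphi(\cdot;\widehat\eta_\epsilon)-\varphi(\cdot;\widehat\eta)=\epsilon^{\top}\ell_v+O(|\epsilon|^2),
\]
where $\ell_v=\widehat Q(1-\widehat Q)\{H(1,C)-H(0,C)-H_{\widehat g}H(A,C)\}$ with the obvious arm-wise meaning. The remainder is uniform by the bounded envelope. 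The quadratic part is $O_p(|\widehat\epsilon|^2)=\od_p(m^{-1/2})$.

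For the linear part I write $\Pn\ell_v=\Pz\ell_v+(\Pn-\Pz)\ell_v$. Since $\ell_v$ is a fixed $\mathcal T_v$-measurable bounded function, Lemma~\ref{S:lem:crossfit} and Lemma~\ref{S:lem:designclt} give $(\Pn-\Pz)\ell_v=O_p(m^{-1/2})$. The remaining piece $\Pz\ell_v$ is $O_p(1)$, not small. This is the main obstacle: $\widehat\epsilon^{\top}\Pz\ell_v$ is only $\od_p(m^{-1/4})$ from Step 1.

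To handle it, I use the TMLE structure rather than a generic bound. For the ATE clever covariate, $H=H_{\widehat g}$ in the one-dimensional case and $H=(A/\widehat g,(1-A)/(1-\widehat g))$ in the two-dimensional case. Then $\Pz\ell_v$ evaluated at $(\widehat Q,\widehat g)=(Q_0,g_0)$ vanishes, because $\E[H_{g_0}H(A,C)\mid C]=H(1,C)-H(0,C)$. This is Neyman orthogonality in the $\epsilon$ direction. Hence $\Pz\ell_v=O_p(r_m)$, and $\widehat\epsilon^{\top}\Pz\ell_v=O_p(r_m(r_m+m^{-1/2}))=\od_p(m^{-1/2})$.

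If the clever covariate were general, I would instead note that the TMLE solves the efficient-score equation. In that case $\widehat\psi_{\mathrm{TMLE}}=\Pn[Q^\star(1,\cdot)-Q^\star(0,\cdot)]$, and I would expand $T_2$ at $\widehat\eta^\star$ via Lemma~\ref{S:lem:prodrate}, whose product-rate bound needs $\|\widehat Q^\star-Q_0\|=O_p(r_m)$ and follows from Step 1.

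\textbf{Step 3 (variance).} Each $\widehat\D_i$ changes from the plug-in by $O(|\widehat\epsilon|)$ uniformly, and $\psihat$ shifts by $\od_p(m^{-1/2})$. The PSU totals $z_{hj}$ therefore move by $O_p(|\widehat\epsilon|\,n_{\max}C_wN/(m\Nhat))=O_p(|\widehat\epsilon|/m)$. By Cauchy--Schwarz the ultimate-cluster sum changes by
\[
O_p\!\left(\sigma_m\,|\widehat\epsilon|\,m^{-1/2}+|\widehat\epsilon|^2m^{-1}\right)=\od_p(\sigma_m^2),
\]
using $m\sigma_m^2\asymp1$ from Condition~\ref{cond:nd}. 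Variance consistency therefore transfers from Theorem~\ref{thm:clt}.
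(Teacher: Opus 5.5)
Your proposal is correct and follows essentially the paper's finite-dimensional-perturbation argument. The ingredients match:
\begin{itemize}
\item a rate $\widehat\epsilon=O_p(r_m+m^{-1/2})$ from (T2) and strict concavity;
\item a linearization in $\epsilon$ whose derivative $\ell_v$ is $\mathcal T_v$-measurable, so only fixed fold-specific functions meet the design and complement arguments and no uniformity over $\epsilon$ is needed;
\item the choice $r_m=\od(m^{-1/4})$, which makes the first-order and quadratic terms $\od_p(m^{-1/2})$;
\item a uniform $O(|\widehat\epsilon|)$ perturbation of the PSU totals for the variance estimator.
\end{itemize}

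One simplification: your fallback paragraph for a general clever covariate is not needed. The identity $\E_{\Pz}[H_{g_0}(A,C)k(A,C)\mid C]=k(1,C)-k(0,C)$ holds for every bounded $k$. Hence $\Pz\ell_v=O(\|\widehat g_{-v}-g_0\|_{\Pz})$ for any bounded clever covariate $H(A,C)$. Alternatively, apply Lemma~\ref{S:lem:prodrate} at the targeted fit, using $\|\widehat Q^\star-Q_0\|\le r_m+C|\widehat\epsilon|$. The fallback's efficient-score claim would also fail unless $H_{\widehat g}$ lies in the span of $H$.
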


\noindent\textit{Proof.} Section~\ref{S:supp:outline-cor} of the Supplementary Materials gives the proof.

\paragraph{Structure of the proof.} The proof separates the four terms of \eqref{eq:onestep}--\eqref{eq:Ssplit}.
Figure~\ref{S:fig:proof-map} of the Supplementary Materials summarizes the four terms and the arguments that control
them.
The fixed-function design result (Lemma~\ref{S:lem:designclt}) gives the normal limit of $S_{\mathrm{des}}$.
It also establishes consistency of the ultimate-cluster variance estimator.
Complement stability, proved in Lemma~\ref{S:lem:crossfit}, controls $T_1$ under the stratified-mixture model.
The product of the nuisance errors controls $T_2$ (Lemma~\ref{S:lem:prodrate}).
Only the super-population interpretation also requires $S_{\mathrm{mod}}$ to be negligible (Lemma~\ref{S:lem:modterm}).
Section~\ref{S:supp:proofs} of the Supplementary Materials states and proves the four lemmas.

\section{Simulation study}\label{sec:sim}

The theorem is asymptotic. The simulation asks whether its approximations were useful at the sample sizes studied,
and whether the estimated standard errors tracked their theoretical target. We report the bias, dispersion and
distance from normality of the standardized error, the accuracy of the variance estimator, the size of the
fitted-nuisance remainders and interval coverage, each with its Monte Carlo uncertainty. This section summarizes the
design and the main results. The supplement gives the full design and numerical methods (Section~\ref{S:supp:sim})
and further results (Section~\ref{S:supp:simres}); the complete configuration-level results are in the reproduction 
repository \citep{KarimCode2026}.

\subsection{Design}\label{sec:sim-design}

\emph{Population generation.} Each finite population is a set of strata of PSUs. A PSU-level covariate
$C_{\mathrm{psu}}\sim U(-1,1)$ drives the PSU size, $n_k=34+14C_{\mathrm{psu}}$ rounded to $\{20,\dots,48\}$.
Units carry four covariates $C_1,\dots,C_4\sim U(-1,1)$, and a class label $K_B\in\{1,2\}$, alternating across
strata, sets the variances of PSU random intercepts in the treatment and outcome models. Treatment $A$ and binary
outcome $Y$ follow logistic models in these covariates with a $C_1C_3$ interaction, a class-specific interaction
and a treatment-by-$C_{\mathrm{psu}}$ term, so neither nuisance function is additive and neither remainder is
trivially zero. The nuisance truth $\eta_0$ is the model marginalized over the random intercepts. The targets
$\Psi_N$ and $\Psi$, the remainders $T_1,T_2$ and the normings $\swor$ and $\sigma_m$ are computed for each
population from its census and by numerical integration, so the truth is known. Clusters are drawn independently
within strata from this law, which satisfies Condition~\ref{cond:mix} by construction.

\emph{Sample selection.} From each population, $m_h$ PSUs per stratum are drawn by SRS-WOR, and $n_0=20$ units are
drawn by SRS-WOR within each selected PSU, so $\pi_{i\mid k}=20/n_k$. Two regimes are indexed by the total number
of sampled PSUs $m$. In the \emph{many-strata} regime $m_h=2$ and $H$ grows, with $m$ from 50 to 1600 at
first-stage fraction $f=0.10$; it is used for the normal approximation, the variance estimator and the TMLE
corollary. In the \emph{fixed-strata} regime $H=5$ and $m_h$ grows at $f=0.10$, with $m$ from 50 to 3200; it is used
for the remainders. Further constructions examine the without-replacement gap and the bridge to $\Psi$.
A fraction sweep fixes $H=50$ and $m_h=2$ ($m=100$) and varies
$f\in\{0.02,0.10,0.25,0.50\}$; AIPW with the correctly specified learner (R-a) and with a random forest (R-b), 
both described below, is also fitted at $f=0.50$.
A fixed-fraction trajectory holds $m_h=2$ and $f=0.50$ while $H$ grows from 25 to 400;
its relative gap is computed exactly from census totals, with R-a and R-b AIPW fitted at $m=200$.
A census-only vanishing-fraction trajectory examines the bridge to $\Psi$.
Section~\ref{S:supp:experiments} of the Supplementary Materials describes the implementation checks.

\emph{Estimators and learners.} Unless otherwise specified, fitted arms use the procedure of Section~\ref{sec:setup} with $V=2$ whole-PSU
folds balanced within strata. Survey weights enter the point estimate, the variance and the TMLE targeting step.
The learners are fitted without weights, which is appropriate here because PSU size depends only on the observed
$C_{\mathrm{psu}}$, so second-stage selection is ignorable given the covariates. The arms differ in the nuisance
fit. An \emph{oracle} arm uses $\eta_0$, so its error is the design term $S_{\mathrm{des}}$ alone. A correctly specified 
random-intercept logistic learner (R-a) is the parametric case: errors of order $m^{-1/2}$, which would meet the rate
conditions, are expected under appropriate regularity conditions, but correct specification alone does not guarantee
them, and no pilot slope diagnostic was run for R-a.
A random forest (R-b, many-strata regime
and the fold-number sensitivity) is the flexible case. A throttled learner (R-d, fixed-strata regime only) refits R-a on a 
random subset of
$\lfloor 30(m/200)^{2\alpha}\rfloor$ training PSUs, designed so that its error decays at the rate $m^{-\alpha}$. Four
rates were designed: $\alpha=0.20$ (below the quarter-rate threshold of Lemma~\ref{S:lem:prodrate}), $0.25$ (on it),
and $0.30$ and $0.35$ (above it). A rate check on pilot replicates found the fitted error slopes compatible with
the designed rates at $\alpha=0.25$ and $0.30$; at $\alpha=0.20$ and $0.35$ the errors decayed faster than designed,
and those two schedules were not run (Table~\ref{S:tab:S-rates}). R-a and R-b are also run as cross-fitted TMLE.
Fitted cells use 400 populations with 5 samples each (160 populations with 5 samples at $f=0.50$
and in the control conditions); oracle cells use 12 fixed populations with 4,000 samples each
(2,000 in the fraction sweep; 10,000 at $f=0.50$ for the variance-gap check).

\subsection{Performance measures, benchmarks and Monte Carlo uncertainty}\label{sec:sim-grading}

The measures are standardized by the theorem's normings, and each except the Kolmogorov distance has a benchmark set
before the main runs (Table~\ref{S:tab:S-margins}). For the normal approximation, the benchmarks
are that $Z=(\hat\psi-\Psi_N)/\swor$ have a mean within $\pm0.10$ and a standard deviation within $1\pm0.05$; its
Kolmogorov distance from the standard normal is reported beside 0.05 as a descriptive reference value. For the
variance estimator, the benchmarks are that
$R_{\mathrm{var}}=\widehat{\mathrm{se}}^2/\sigma_m^2$ have a mean within $1\pm0.15$ and a standard deviation of at
most 0.15, and that its mean absolute difference from the oracle's variance estimate on the same sample, relative to
$\sigma_m^2$ (the transfer), be at most 0.25. For the remainders and the bridge to $\Psi$, the benchmark is that the
0.9 quantile of the absolute term divided by $\swor$ be at most 0.10: nine in ten replicates should have a remainder
below a tenth of $\swor$, the without-replacement design standard deviation of the leading term, which is smaller
than the ultimate-cluster standard error when the fraction is appreciable. We use 0.10 as a descriptive reference
for the magnitude of these standardized terms. It is an author-selected benchmark, not a theorem-derived threshold
or a guarantee about coverage; the other benchmarks are likewise the study's own choices, not general standards.

Three variability quantities appear throughout and are kept distinct. Two are the variance scales of
Section~\ref{sec:setup}: $\swor^2$, the norming of the theorem's normal limit, and $\sigma_m^2$, the quantity the
squared standard error estimates, which under SRS-WOR is at least $\swor^2$. The third is the empirical spread of
$\hat\psi$ about $\Psi_N$ across replicates, summarized by SD$(Z)$ in units of $\swor$; it includes in addition the
remainders $T_1$ and $T_2$.

Two kinds of interval also appear, and we name them differently. A \emph{confidence interval} is what an analyst
would report, $\hat\psi\pm z_{1-\gamma/2}\widehat{\mathrm{se}}$; its coverage of $\Psi_N$ is a simulation result. A
\emph{Monte Carlo interval} expresses the simulation's own uncertainty about a summary such as SD$(Z)$. It is the
95\% percentile interval of 10,000 bootstrap resamples of the saved output, resampling whole populations for fitted
cells (each with its five samples, so that samples from the same population are not treated as independent), samples
for oracle cells and populations for census cells. For the Kolmogorov distance, the estimate and every resampled
value include empirical-distribution noise, so its Monte Carlo interval summarizes resampling variability but is not
treated as a calibrated confidence interval for the distance, particularly near zero; its comparisons with 0.05 are
descriptive.

Each benchmarked measure is compared with its benchmark through its Monte Carlo interval: the benchmark is met if the interval
lies inside the benchmark region, not met if it lies wholly outside, and the comparison is unresolved at the
simulated precision if the interval crosses the boundary. Comparisons are made at each arm's largest sample size;
smaller sizes are reported as trajectories. Three kinds of result are reported descriptively beside the benchmark:
\begin{itemize}[leftmargin=*,itemsep=1pt,topsep=2pt]
\item the forest's normal approximation and TMLE, because the required convergence-rate condition was not
established for this learner (its variance comparison, which needs only $L^2$ consistency, is made);
\item the product remainder $T_2$ at $\alpha=0.25$, because this learner's errors were designed to decay at exactly
$m^{-1/4}$, the boundary of the product-rate condition of Lemma~\ref{S:lem:prodrate} (its $T_1$, which needs only
consistency, is compared);
\item summaries for which no Monte Carlo interval was formed: the correctly specified learner's $T_2$, and the
bridge to $\Psi$ at fractions above 0.0025.
\end{itemize}
The estimates are reported in all these cases.

Three control conditions check that the diagnostics respond to departures they are meant to detect. The
point-limit control shifts the centre by $0.30\,\swor$; a variance control computes the standard error treating
units as clusters; and the remainder control perturbs the outcome and propensity functions by one aligned logit
shift, calibrated once on the census so that the 0.9 quantile of $|T_2|/\swor$ equals 0.30. Two further variance
controls, expected to stay inside the variance benchmark, are reported descriptively (Table~\ref{S:tab:S-estimates}).
The bridge to $\Psi$ is compared with its benchmark only at the smallest fraction of the vanishing-fraction
trajectory, $f=0.0025$, the smallest fraction considered (Section~\ref{sec:sim-results}).

\begin{figure}[p]
\footnotesize
\centering
\includegraphics[width=0.87\linewidth]{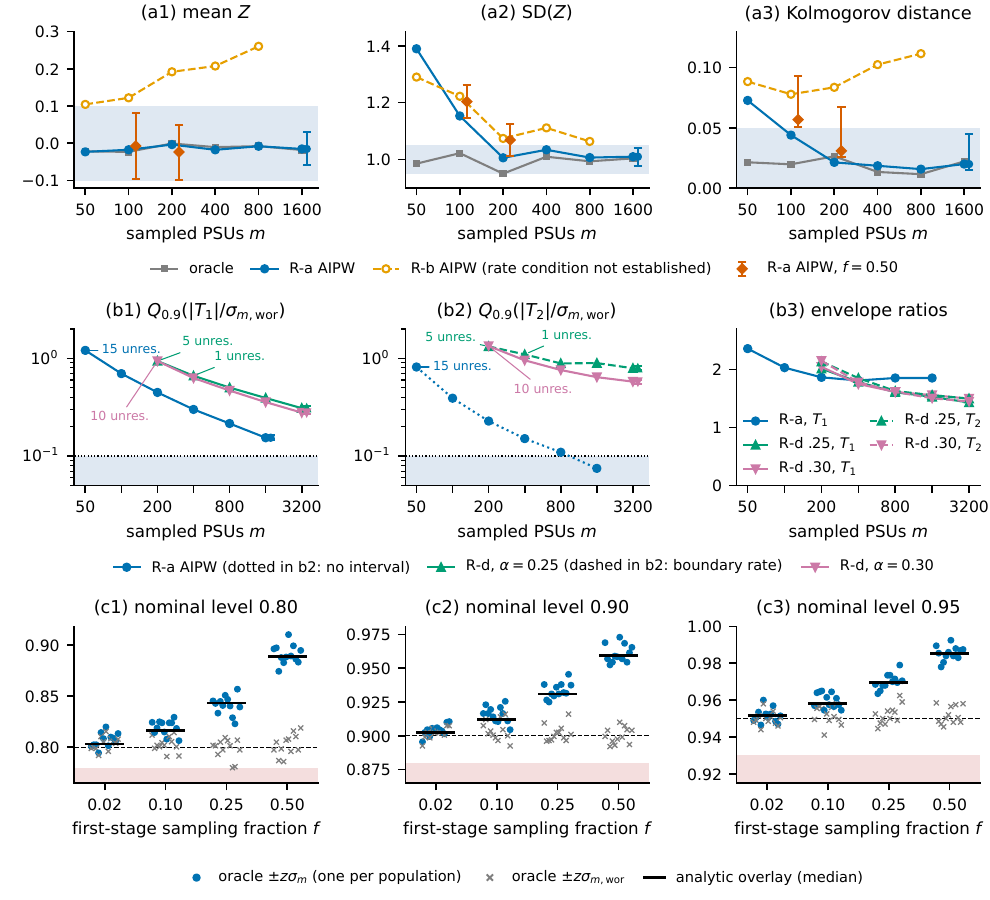}
\caption{Simulation results (estimates in Table~\ref{tab:sim}). (a) Mean, standard deviation and Kolmogorov distance from $\mathcal N(0,1)$ of the standardized error $Z=(\hat\psi-\Psi_N)/\swor$ of cross-fitted AIPW in the many-strata regime ($m_h=2$, $f=0.10$; 400 populations $\times$ 5 samples per point); diamonds show the two configurations with $f=0.50$ (160 $\times$ 5). (b) 0.9 quantiles of $|T_1|/\swor$ and $|T_2|/\swor$ in the fixed-strata regime ($H=5$, $f=0.10$; log scale), and in (b3) of the envelope ratios $\sqrt m|T_1|/\hat r_m$ and $|T_2|/(\hat r_Q\hat r_g)$, with $\hat r$ the measured out-of-fold $L^2$ errors; labels give the number of the 2,000 replicates with an unresolved remainder, which the quantiles exclude. (c) Coverage of the oracle intervals $\hat\psi_{\mathrm{oracle}}\pm z\sigma_m$ (circles) and $\hat\psi_{\mathrm{oracle}}\pm z\swor$ (crosses) across first-stage fractions ($H=50$, $m=100$; 12 populations $\times$ 2,000 samples), with the analytic prediction $2\Phi(z\sigma_m/\swor)-1$ (bars: medians); dashed lines mark the nominal levels. Blue shading marks benchmark regions, and for the Kolmogorov distance the descriptive reference region below 0.05; red, coverage more than 0.02 below nominal. Error bars are 95\% Monte Carlo intervals (for the Kolmogorov distance, resampling summaries only). Hollow symbols mark the forest, for which the required convergence-rate condition was not established; its values are descriptive.}
\label{fig:sim}
\par\smallskip
{\scriptsize\raggedright\noindent\textbf{Alt text:} Nine panels. Top row: the correctly specified estimator approaches the
standard-normal reference as sample size grows, while the forest's standardized mean error (hollow symbols) grows;
with 100 sampled clusters and half of each stratum selected, the fitted estimator's spread exceeds the leading-term
design standard deviation. Middle row: remainder quantiles decline overall but stay above 0.10 at the largest sizes,
except the correctly specified learner's second-order term; envelope ratios level off. Bottom row: oracle coverage
of the 12 populations rises above nominal with the sampling fraction under the ultimate-cluster variance and stays
near nominal under the without-replacement variance.\par}
\end{figure}

\subsection{Results}\label{sec:sim-results}

Table~\ref{tab:sim} reports the principal estimates beside their benchmarks, and Figure~\ref{fig:sim} the
trajectories behind them; Section~\ref{S:supp:simres} gives further results. All \SimCells\ cells completed, and no
learner fit failed. One numerical implementation check failed in a separate code-check cell that contributes no
reported result (Section~\ref{S:supp:experiments}). \SimUnresolvedOther\ AIPW arm-by-replicate remainder values, all
at $m\le400$, could not be resolved numerically. They are recorded as missing and omitted from the remainder
summaries, and the corresponding point and variance estimates are kept. A further \SimUnresolvedTMLE\ unresolved values
concern R-a TMLE, whose remainder calculations were not validated and are not reported at any sample size
(Table~\ref{S:tab:S-unresolved}). None of these \SimUnresolvedTotal\ entries occurs in a configuration whose results are
compared with a benchmark.

\paragraph{Normal approximation.} In the many-strata regime the standardized error of the cross-fitted AIPW
estimator with the correctly specified learner (R-a) met both normal-approximation benchmarks at $m=1600$: mean $Z$
was \SimCoMsMeanZ\ (Monte Carlo interval \SimCoMsMeanZInt) and SD$(Z)-1$ was \SimCoMsSD\ \SimCoMsSDInt. Its
Kolmogorov distance was \SimCoMsKS\ \SimCoMsKSInt, below the descriptive reference 0.05 (Figure~\ref{fig:sim}a). At smaller $m$ the fitted estimator was more
dispersed than its norming (Table~\ref{S:tab:S-ms}).

With half of each stratum's PSUs sampled ($f=0.50$, $m=100$), the same estimator was centred, with mean $Z$
\SimCoFpcMeanZ\ \SimCoFpcMeanZInt, but its standard deviation was \SimCoFpcSDZ\ times $\swor$ (Monte Carlo interval
\SimCoFpcSDZInt), wholly outside its benchmark, and its Kolmogorov distance was \SimCoFpcKS\ \SimCoFpcKSInt, above
the descriptive reference 0.05. At $f=0.50$ and $m=200$ the excess was smaller: SD$(Z)-1$ was \SimCoStrSD\
\SimCoStrSDInt, an interval that crosses its benchmark, so the comparison is unresolved at the simulated precision;
the Kolmogorov distance was \SimCoStrKS\ \SimCoStrKSInt.

For the random forest (R-b), the required convergence-rate condition was not established: the one-sided 95\% upper
bound on the slope of its log error against $\log m$ was \SimRbUpper, not below $-1/4$. Its normal-approximation
results are therefore reported descriptively. Its standardized mean error increased with $m$, from \SimRbMeanZFifty\
at $m=50$ to \SimRbMeanZEight\ at $m=800$, outside the $\pm0.10$ benchmark, while its nominal 95\% interval covered in
\SimRbCovEightPct\% of replicates at $m=800$ (Table~\ref{tab:sim}, panel B). This trend
concerns the error relative to $\swor$, the leading term's design standard deviation, which shrinks as $m$ grows; by
itself it does not show whether the absolute bias increased or decreased.

\paragraph{Coverage and standard-error accuracy.} At $f=0.50$ and $m=100$ the nominal 95\% confidence interval of
the R-a AIPW estimator covered $\Psi_N$ in \SimEtCovNinetyFivePct\% of replicates (Table~\ref{tab:sim}, panel B),
above the nominal level, although the estimator's standard deviation was \SimCoFpcSDZ\ times $\swor$, a variance
excess of about \SimCoFpcVarExcessPct\%. Two effects widen the interval. First, the squared standard error estimates the
ultimate-cluster variance $\sigma_m^2$, which exceeds $\swor^2$ at this fraction: in the 12 oracle populations
evaluated under the same design, $\sigma_m/\swor$ ranged from \SimEtSigRatioMin\ to \SimEtSigRatioMax. Second, the
fitted variance estimate exceeded its own target: the mean of $\widehat{\mathrm{se}}^2/\sigma_m^2$ was \SimEtRvar, a
\SimEtRvarPct\% excess on the variance scale (at most about \SimEtRvarSqrtPct\% on the standard-error scale;
Section~\ref{S:supp:simres}). Coverage above nominal therefore coexisted with excess dispersion and an overestimated
variance; it did not show that the standard error was accurate. The oracle estimator in the same cell had SD$(Z)$
\SimEtOracleSDZ\ and coverage \SimEtOracleCovNinetyFivePct\%, which points to the fitted-nuisance remainders as one
source of the extra spread; the simulation did not isolate that mechanism.

\paragraph{Variance estimation.} The ultimate-cluster standard error met all three variance benchmarks for both
learners at their largest sample sizes. For R-a at $m=1600$ the mean of $R_{\mathrm{var}}$ exceeded 1 by
\SimCtRaMean\ \SimCtRaMeanInt, its standard deviation was \SimCtRaSD, and the mean transfer was \SimCtRaTr. For the
random forest at $m=800$, whose slope diagnostic was consistent with the $L^2$ consistency this comparison needs, the
mean excess was \SimCtRbMean\ \SimCtRbMeanInt, inside the $\pm0.15$ benchmark; a \SimCtRbPct\% variance excess
corresponds to about \SimCtRbSEPct\% on the standard-error scale. At smaller $m$ both fitted learners overestimated
$\sigma_m^2$ by more (Table~\ref{S:tab:S-ms}).

\paragraph{Without-replacement gap and conservativeness.} For the oracle estimator, the without-replacement design
variance agreed with the theory in all \SimCthaN\ populations at $f=0.50$: the signed difference between
$\widehat{\mathrm{Var}}(S_{\mathrm{des}})/\sigma_m^2$ and $1-\Delta_m/\sigma_m^2$ ranged from \SimCthaMin\ to
\SimCthaMax, and all Monte Carlo intervals lay inside $\pm0.05$. Along the fixed-fraction trajectory the census
relative gap $\Delta_m/\sigma_m^2$ stayed between \SimStrictMin\ and \SimStrictMax, which is compatible with
normings that stay strictly separated; a finite trajectory cannot establish an asymptotic limit. Across the four fractions,
the 12 populations and the levels 0.80, 0.90 and 0.95, the largest observed coverage shortfall of the oracle interval
$\hat\psi_{\mathrm{oracle}}\pm z\sigma_m$ below its nominal level was \SimCtbWorstShortfall, against the study's
benchmark of 0.02, and coverage rose with the fraction as the analytic prediction
$2\Phi(z_{1-\gamma/2}\sigma_m/\swor)-1$ indicates (Figure~\ref{fig:sim}c). At $f=0.50$ and level 0.95, coverage
ranged from \SimCtbJFourCovDetMin\ to \SimCtbJFourCovDetMax\ across the 12 populations, with median
\SimCtbJFourCovDetMed\ against \SimCtbJFourOverlayMed\ predicted. Each population's coverage rests on 2,000 redraws,
so its binomial standard error is about \SimCovSEBig; Table~\ref{S:tab:S-fpc} reports results for the 12 populations.

\paragraph{Remainders.} Both remainder terms decreased overall with $m$ (Figure~\ref{fig:sim}b; the only rise in a
displayed trajectory is that of $T_2$ at the boundary rate $\alpha=0.25$, from \SimCfDqTwoEight\ at $m=800$ to
\SimCfDqTwoSixteen\ at $m=1600$, before it fell). The remainder diagnostics assessed against the 0.10 benchmark remained above it
at the largest sample size. For the correctly specified learner, the 0.9 quantile of $|T_1|/\swor$ fell from
\SimCfRaFirst\ at $m=50$ to \SimCfRaQ\ at $m=1600$ (Monte Carlo interval \SimCfRaQInt), by a factor of
\SimCfRaDropMin\ to \SimCfRaDropMax\ per doubling of $m$; its second-order term, summarized without a Monte Carlo
interval, was \SimCfRaTtwoTerm\ at $m=1600$, below 0.10. For the throttled learner at $m=3200$ the 0.9 quantile of
$|T_1|/\swor$ was \SimCfDqT\ \SimCfDqTInt\ at $\alpha=0.25$ and \SimCfDtT\ \SimCfDtTInt\ at $\alpha=0.30$, and that
of $|T_2|/\swor$ at $\alpha=0.30$ fell from \SimCfDtTwoFirst\ at $m=200$ to \SimCfDtTwo\ \SimCfDtTwoInt; at the
boundary rate $\alpha=0.25$ it fell from \SimCfDqTwoFirst\ to \SimCfDqTwo. The envelope ratios
$\sqrt m|T_1|/\hat r_m$ and $|T_2|/(\hat r_Q\hat r_g)$, with $\hat r$ the measured out-of-fold errors, are diagnostics
motivated by the proof's bounds, which use the deterministic sequence $r_m$ rather than $\hat r$; they were stable or
decreasing along each sequence (Figure~\ref{fig:sim}b3).

The remainder quantiles declined overall, but at the largest sample sizes those compared with the benchmark
remained above 0.10, even for the correctly specified learner, whose $T_1$ quantile was about \SimCfRaTimes\ times
the benchmark at $m=1600$. This is a finite-sample finding: the theorem's remainder statements are asymptotic, with
no rate attached, so it neither contradicts nor establishes them. Coverage differed across learners. At $m=1600$ the
correctly specified learner's interval covered in \SimFsRaCovPct\% of replicates (SD$(Z)$ \SimFsRaSDZ), against
\SimFsOrCovPct\% for the oracle. At $m=3200$ the throttled learner, whose remainders were larger, covered in
\SimFsDqCovPct\% and \SimFsDtCovPct\% of replicates at $\alpha=0.25$ and $0.30$ (SD$(Z)$ \SimFsDqSDZ\ and
\SimFsDtSDZ), against \SimFsOrCovThreePct\% for the oracle (Table~\ref{tab:sim}, panel B). These comparisons do not
isolate the contribution of an individual remainder.

No comparison is available below the quarter-rate threshold, where it would have asked whether $T_1$ remains
negligible when the product-rate condition fails: the only schedule there, $\alpha=0.20$, was not run because its
learner's error decayed faster than designed (Table~\ref{S:tab:S-rates}).

\paragraph{TMLE.} Cross-fitted TMLE with the correctly specified learner differed little from the matched AIPW
estimator at $m=800$: the 0.9 quantile of their standardized difference was \SimCsPsi\ \SimCsPsiInt, and that of the
difference in their variance estimates, relative to $\sigma_m^2$, was \SimCsSe\ \SimCsSeInt, both far inside the
0.10 benchmark. The forest TMLE is reported descriptively for the same reason (Table~\ref{tab:sim}, panel B).

\paragraph{Finite-population and super-population targets.} The census parameter approached the super-population
target as the first-stage fraction decreased, as Lemma~\ref{S:lem:modterm} describes, but slowly relative to $\swor$:
the 0.9 quantile across populations of $|\Psi_N-\Psi|/\swor$ was \SimBridgeFirst\ at $f=0.20$ and \SimBridgeLastTraj\
at $f=0.025$, so at $f=0.025$ one population in ten had $\Psi_N$ more than a quarter of $\swor$ from $\Psi$. At
$f=0.0025$ it was \SimBridge\ \SimBridgeInt, inside the 0.10 benchmark. The smallest fraction was included after a leading-order calculation predicted a 0.9 quantile
of about 0.07 there (about 0.22 at $f=0.025$). Table~\ref{S:tab:S-vanish} and Figure~\ref{S:fig:bridge} report
the observed quantiles. This comparison illustrates the prediction rather than testing it independently. The trajectory shows how far apart the two targets can be at sampling fractions
that surveys use, which is why the paper's primary target is $\Psi_N$.

\paragraph{Controls.} The three control conditions reproduced the departures they were designed to produce. The shifted centre gave mean $Z$ \SimPcShift\ \SimPcShiftInt; treating units as clusters gave
mean $R_{\mathrm{var}}$ \SimPcUnit\ \SimPcUnitInt, outside $[0.85,1.15]$; and the aligned nuisance error, calibrated
on the census to 0.30, gave a 0.9 quantile of $|T_2|/\swor$ of \SimPcAlign\ \SimPcAlignInt. These show that the
diagnostics respond to these specific departures at these sizes, not to every possible failure.

{\footnotesize\setlength{\tabcolsep}{3pt}
\begin{longtable}{@{}p{0.23\linewidth}lll@{}}
\caption{Principal simulation estimates with 95\% Monte Carlo intervals, beside the study's benchmarks.
Section~\ref{sec:sim-grading} defines $Z$, $R_{\mathrm{var}}$, the transfer, the benchmarks and the Monte Carlo
intervals. \emph{Panel A.} Arms: R-a, correctly specified learner; R-b, random forest; R-d, throttled learner with
designed error rate $m^{-\alpha}$. KS is the Kolmogorov distance of $Z$ from $\mathcal N(0,1)$; its 0.05 is a
descriptive reference, not a benchmark, and its interval summarizes resampling variability only. $Q_{0.9}|\cdot|$ is the
0.9 quantile of the absolute quantity standardized by $\swor$ (by $\sigma_m^2$ for the difference in variance
estimates), with $\Delta\hat\psi=\hat\psi_{\mathrm{TMLE}}-\hat\psi_{\mathrm{AIPW}}$. The gap error is
$\widehat{\mathrm{Var}}(S_{\mathrm{des}})/\sigma_m^2-(1-\Delta_m/\sigma_m^2)$; its row gives the range over
populations, and the brackets enclose the extreme population-specific Monte Carlo limits. ``Lowest
coverage$-$level'' is the smallest difference, over the 12 populations and the levels 0.80, 0.90 and 0.95, between the
coverage of $\hat\psi_{\mathrm{oracle}}\pm z\sigma_m$ and its nominal level; no interval is given for it. Values
marked ``no interval'' were summarized without a Monte Carlo interval; ``boundary rate'' marks $T_2$ at
$\alpha=0.25$, where the learner's designed error rate is on the boundary of the product-rate condition. Both are
descriptive. The fraction $f=0.0025$ was chosen after a prediction (Section~\ref{sec:sim-results}). Not shown: the forest's normal-approximation and
TMLE values, which are descriptive because its convergence-rate condition was not established (they are in panel
B), and the throttled rates $\alpha=0.20$ and $0.35$, which were not run. Panel B follows.}\label{tab:sim}\\
\toprule
setting & measure & estimate [95\% MC interval] & benchmark\\
\midrule
\endfirsthead
\toprule
setting & measure & estimate [95\% MC interval] & benchmark\\
\midrule
\endhead
\bottomrule
\endlastfoot
\multicolumn{4}{@{}l}{\textit{Normal approximation of $Z$; correctly specified learner (R-a), AIPW}}\\
many strata, $m=1600$ & mean $Z$ & \ensuremath{-0.015} [\ensuremath{-0.059}, \ensuremath{0.029}] & $|\cdot|\le 0.10$\\
 & SD$(Z)-1$ & \ensuremath{0.010} [\ensuremath{-0.022}, \ensuremath{0.042}] & $|\cdot|\le 0.05$\\
 & KS & \ensuremath{0.020} [\ensuremath{0.015}, \ensuremath{0.045}] & $\le 0.05$ (reference)\\
$f=0.50$, $m=100$ & mean $Z$ & \ensuremath{-0.007} [\ensuremath{-0.097}, \ensuremath{0.081}] & $|\cdot|\le 0.10$\\
 & SD$(Z)-1$ & \ensuremath{0.203} [\ensuremath{0.146}, \ensuremath{0.263}] & $|\cdot|\le 0.05$\\
 & KS & \ensuremath{0.057} [\ensuremath{0.051}, \ensuremath{0.093}] & $\le 0.05$ (reference)\\
$f=0.50$, $m=200$ & mean $Z$ & \ensuremath{-0.023} [\ensuremath{-0.099}, \ensuremath{0.048}] & $|\cdot|\le 0.10$\\
 & SD$(Z)-1$ & \ensuremath{0.069} [\ensuremath{0.013}, \ensuremath{0.126}] & $|\cdot|\le 0.05$\\
 & KS & \ensuremath{0.031} [\ensuremath{0.026}, \ensuremath{0.067}] & $\le 0.05$ (reference)\\
\midrule
\multicolumn{4}{@{}l}{\textit{Variance estimation, $R_{\mathrm{var}}=\widehat{\mathrm{se}}^2/\sigma_m^2$; AIPW, many strata}}\\
R-a, $m=1600$ & mean $R_{\mathrm{var}}-1$ & \ensuremath{0.008} [\ensuremath{0.006}, \ensuremath{0.011}] & $|\cdot|\le 0.15$\\
 & SD$(R_{\mathrm{var}})$ & \ensuremath{0.055} [\ensuremath{0.053}, \ensuremath{0.057}] & $\le 0.15$\\
 & transfer & \ensuremath{0.016} [\ensuremath{0.016}, \ensuremath{0.017}] & $\le 0.25$\\
R-b (forest), $m=800$ & mean $R_{\mathrm{var}}-1$ & \ensuremath{0.130} [\ensuremath{0.126}, \ensuremath{0.134}] & $|\cdot|\le 0.15$\\
 & SD$(R_{\mathrm{var}})$ & \ensuremath{0.093} [\ensuremath{0.090}, \ensuremath{0.096}] & $\le 0.15$\\
 & transfer & \ensuremath{0.131} [\ensuremath{0.128}, \ensuremath{0.133}] & $\le 0.25$\\
\midrule
\multicolumn{4}{@{}l}{\textit{Oracle estimator, 12 fixed populations ($H=50$, $m=100$)}}\\
$f=0.50$ & gap error, range & \ensuremath{-0.011} to \ensuremath{0.014} [\ensuremath{-0.028}, \ensuremath{0.033}] & $|\cdot|\le 0.05$\\
$f=0.02$ & lowest coverage$-$level & \ensuremath{-0.006} & $\ge -0.02$\\
$f=0.10$ & lowest coverage$-$level & \ensuremath{0.004} & $\ge -0.02$\\
$f=0.25$ & lowest coverage$-$level & \ensuremath{0.014} & $\ge -0.02$\\
$f=0.50$ & lowest coverage$-$level & \ensuremath{0.028} & $\ge -0.02$\\
\midrule
\multicolumn{4}{@{}l}{\textit{Remainders at the largest sample size, fixed strata ($H=5$); $Q_{0.9}|T_j|$ standardized by $\sigma_{m,\mathrm{wor}}$}}\\
R-a, $m=1600$ & $Q_{0.9}|T_1|$ & \ensuremath{0.154} [\ensuremath{0.148}, \ensuremath{0.163}] & $\le 0.10$\\
 & $Q_{0.9}|T_2|$ & \ensuremath{0.075} (no interval) & $\le 0.10$\\
R-d, $\alpha=0.25$, $m=3200$ & $Q_{0.9}|T_1|$ & \ensuremath{0.309} [\ensuremath{0.297}, \ensuremath{0.323}] & $\le 0.10$\\
 & $Q_{0.9}|T_2|$ & \ensuremath{0.793} [\ensuremath{0.756}, \ensuremath{0.843}] & $\le 0.10$; boundary rate\\
R-d, $\alpha=0.30$, $m=3200$ & $Q_{0.9}|T_1|$ & \ensuremath{0.279} [\ensuremath{0.268}, \ensuremath{0.289}] & $\le 0.10$\\
 & $Q_{0.9}|T_2|$ & \ensuremath{0.578} [\ensuremath{0.550}, \ensuremath{0.605}] & $\le 0.10$\\
\midrule
\multicolumn{4}{@{}l}{\textit{TMLE versus AIPW on the same samples; R-a, many strata, $m=800$}}\\
R-a & $Q_{0.9}|\Delta\hat\psi|$ & \ensuremath{0.0066} [\ensuremath{0.0063}, \ensuremath{0.0072}] & $\le 0.10$\\
 & $Q_{0.9}|\Delta\widehat{\mathrm{se}}^2|$ & \ensuremath{0.0013} [\ensuremath{0.0012}, \ensuremath{0.0014}] & $\le 0.10$\\
\midrule
\multicolumn{4}{@{}l}{\textit{Finite-population versus super-population target (census, $m_h=2$)}}\\*
$f=0.20$ & $Q_{0.9}|\Psi_N-\Psi|$ & \ensuremath{0.704} (no interval) & $\le 0.10$\\*
$f=0.10$ & $Q_{0.9}|\Psi_N-\Psi|$ & \ensuremath{0.440} (no interval) & $\le 0.10$\\*
$f=0.05$ & $Q_{0.9}|\Psi_N-\Psi|$ & \ensuremath{0.326} (no interval) & $\le 0.10$\\*
$f=0.025$ & $Q_{0.9}|\Psi_N-\Psi|$ & \ensuremath{0.261} (no interval) & $\le 0.10$\\*
$f=0.0025$ & $Q_{0.9}|\Psi_N-\Psi|$ & \ensuremath{0.063} [\ensuremath{0.053}, \ensuremath{0.085}] & $\le 0.10$\\
 
\end{longtable}}

{\scriptsize\setlength{\tabcolsep}{4pt}\renewcommand{\theHtable}{panelB.\arabic{table}}
\begin{longtable}{@{}llrrrrr@{}}
\caption*{\textbf{Table~\ref{tab:sim}, panel B.} Coverage of the estimator's nominal 95\% confidence interval
$\hat\psi\pm1.96\,\widehat{\mathrm{se}}$ about $\Psi_N$, reported with its components: mean $Z$ (the standardized
bias), SD$(Z)$ (the spread relative to $\swor$) and mean $R_{\mathrm{var}}$ (the variance estimate relative to its
target $\sigma_m^2$). These are descriptive point estimates over the replicates shown, without Monte Carlo
intervals. For orientation, the binomial standard error of a coverage near 0.95 would be about \SimCovSEBig\ for
2,000 and \SimCovSESmall\ for 800 independent replicates; the replicates here are clustered, five per population, so
these values are approximations that ignore that clustering.}\\
\toprule
setting & arm & mean $Z$ & SD$(Z)$ & mean $R_{\mathrm{var}}$ & coverage & replicates\\
\midrule
\endfirsthead
\toprule
setting & arm & mean $Z$ & SD$(Z)$ & mean $R_{\mathrm{var}}$ & coverage & replicates\\
\midrule
\endhead
\bottomrule
\endlastfoot
many strata, $m=1600$ & oracle & \ensuremath{-0.018} & \ensuremath{1.004} & \ensuremath{1.000} & \ensuremath{0.956} & 2000\\
 & R-a AIPW & \ensuremath{-0.015} & \ensuremath{1.010} & \ensuremath{1.008} & \ensuremath{0.955} & 2000\\
many strata, $m=800$ & oracle & \ensuremath{-0.008} & \ensuremath{0.994} & \ensuremath{0.999} & \ensuremath{0.958} & 2000\\
 & R-a AIPW & \ensuremath{-0.008} & \ensuremath{1.007} & \ensuremath{1.015} & \ensuremath{0.958} & 2000\\
 & R-a TMLE & \ensuremath{-0.008} & \ensuremath{1.006} & \ensuremath{1.015} & \ensuremath{0.958} & 2000\\
 & R-b AIPW & \ensuremath{0.260} & \ensuremath{1.064} & \ensuremath{1.130} & \ensuremath{0.957} & 2000\\
 & R-b TMLE & \ensuremath{0.247} & \ensuremath{1.047} & \ensuremath{1.130} & \ensuremath{0.960} & 2000\\
$f=0.50$, $m=100$ & oracle & \ensuremath{0.012} & \ensuremath{1.045} & \ensuremath{0.996} & \ensuremath{0.983} & 800\\
 & R-a AIPW & \ensuremath{-0.007} & \ensuremath{1.203} & \ensuremath{1.146} & \ensuremath{0.969} & 800\\
 & R-b AIPW & \ensuremath{0.218} & \ensuremath{1.300} & \ensuremath{1.302} & \ensuremath{0.959} & 800\\
$f=0.50$, $m=200$ & oracle & \ensuremath{-0.039} & \ensuremath{0.993} & \ensuremath{1.004} & \ensuremath{0.983} & 800\\
 & R-a AIPW & \ensuremath{-0.023} & \ensuremath{1.069} & \ensuremath{1.084} & \ensuremath{0.975} & 800\\
 & R-b AIPW & \ensuremath{0.171} & \ensuremath{1.206} & \ensuremath{1.239} & \ensuremath{0.965} & 800\\
fixed strata, $m=1600$ & oracle & \ensuremath{-0.001} & \ensuremath{0.976} & \ensuremath{0.998} & \ensuremath{0.962} & 2000\\
 & R-a AIPW & \ensuremath{0.003} & \ensuremath{0.982} & \ensuremath{1.006} & \ensuremath{0.965} & 2000\\
fixed strata, $m=3200$ & oracle & \ensuremath{0.015} & \ensuremath{0.994} & \ensuremath{0.999} & \ensuremath{0.961} & 2000\\
 & R-d AIPW, $\alpha=0.25$ & \ensuremath{0.017} & \ensuremath{1.141} & \ensuremath{1.059} & \ensuremath{0.937} & 2000\\
 & R-d AIPW, $\alpha=0.30$ & \ensuremath{0.001} & \ensuremath{1.073} & \ensuremath{1.043} & \ensuremath{0.945} & 2000\\
 
\end{longtable}}
\addtocounter{table}{-1}

\section{Real-data illustration: uncertainty under an imposed facility-sampling design}\label{sec:app}

For the urban--rural satisfaction comparison introduced in Section~\ref{sec:intro}, we use the SPA surveys of
Malawi 2013--14 and of Haiti 2017--18 and 2013 \citep{MalawiSPA2014,HaitiSPA2019,HaitiSPA2014}, which interviewed clients as they left
antenatal-care, family-planning and sick-child consultations. The substantive question is how the proportion of
clients who report being ``very satisfied'' differs between urban and rural facilities after adjustment for the
client's age group, education and the type of service. Urban location is not a randomized treatment, so this
contrast is an adjusted association, estimated with the paper's influence-function machinery; it is not presented
as a causal effect. The methodological question, and the reason for including the study, is whether the
cross-fitted standard error reports the estimator's actual sampling variability under a known two-stage design with
an appreciable first-stage fraction.

\paragraph{Frames and imposed design.} In each survey we treat the eligible records as a fixed finite population:
completed facility interviews with a completed observation and exit interview for one of the three services, in
facilities with at least one such record, excluding records with a missing exposure or outcome. The frames hold
\AppMwFacilities\ facilities and \AppMwRecords\ records in Malawi (\AppMwExcluded\ records excluded),
\AppHtsFacilities\ facilities and \AppHtsRecords\ records in Haiti 2017--18, and \AppHttFacilities\ facilities and
\AppHttRecords\ records in Haiti 2013. They are analytical populations defined by the surveys' own fieldwork and
eligibility rules, not the countries' full facility lists or client populations, and the original selection of
clients is not reconstructed. On each frame we impose stratified two-stage sampling with simple random sampling
without replacement at both stages, the sampling stages that Theorem~\ref{thm:clt} assumes, and repeat it 500 times.
Facilities are grouped into three facility-type strata; about a quarter of the facilities in each stratum are drawn by
SRS-WOR (\AppMwPsu, \AppHtsPsu\ and \AppHttPsu\ facilities per sample), and then $\min(3,n_k)$ records are drawn by
SRS-WOR within each sampled facility $k$, where $n_k$ is its number of eligible records. The design weight is the
reciprocal of the product of the two inclusion probabilities, so the strata, the inclusion probabilities and the
first-stage fraction are known exactly.

\paragraph{Estimators.} The two primary methods are cross-fitted TMLE and cross-fitted AIPW as implemented in the
\textsf{svytmle} R package, version 0.1.2 \citep{svytmle2026} (see Section~\ref{S:supp:app}).
Both are fitted afresh on each sample. The nuisance models are Super Learner 
ensembles \citep{vanderLaanPolleyHubbard2007,SuperLearner2026} of logistic regression, weighted multivariate adaptive
regression splines, penalized regression and random forests, with fitted propensities truncated to $[0.05,0.95]$.
They are fitted with the survey weights, rescaled within each training fold. This differs deliberately from the
unweighted learners of the simulation: here the second-stage inclusion probability depends on facility caseload,
which plausibly relates to both the exposure and the outcome, so sampling is informative given the adjustment
covariates. Whole facilities were assigned to folds within strata. Within each training fold, Super Learner chooses
its ensemble weights by its default record-level cross-validation, so records from one training facility can fall
on both sides of an inner split; records from the held-out facilities do not enter that fit. Five folds were requested, and the frame counts
imply that five were used throughout (at least \AppMinSampledAll\ facilities were sampled in every stratum), although
the effective number was not logged. The package balances fold sizes only approximately, so the folds do not satisfy
the theorem's exact balance condition $V\mid m_h$; Section~\ref{S:supp:app} gives the fold rule.
The TMLE applies one targeting fluctuation pooled over folds and weighted by the design weights. The standard error
is the stratified ultimate-cluster estimator on the fitted influence function, without a finite-population
correction. To evaluate the intervals as the package reports them, we retain its Student-$t$ reference with design degrees
of freedom equal to the number of sampled facilities minus the number of strata (180 to 199 here), the
conventional choice when the variance is estimated from between-facility variation within strata
\citep[Section~5.2]{KornGraubard1999}; the simulation uses the standard normal reference of the theorem's limit. 
The $t$ critical values, about 1.97
rather than 1.96, widen the intervals by about 0.6\%--0.7\%. The reference affects interval width and coverage,
but not the variance ratios or the standard-error ratios.

\paragraph{Evaluation.} The primary measure is the variance ratio: the mean of the 500 estimated variances divided by
the empirical variance of the 500 estimates, with a replicate-bootstrap Monte Carlo standard error. Because the
resamples within each survey are drawn from the same frame, the empirical variance estimates the implemented estimator's sampling variance
under the imposed design, so the ratio needs no knowledge of a true effect. A ratio above one means that the reported
uncertainty is too large on average; it does not identify the source of the excess. We also report the mean standard
error divided by the empirical standard deviation, which is not the square root of the variance ratio. Coverage needs
a reference value, which a frame does not supply. Each estimator family was therefore also run on the whole frame
with unit weights over 20 fold splits, and the median serves as that family's whole-frame reference value
(Table~\ref{S:tab:S-app-ref}). Coverage of this algorithmic reference is a secondary, descriptive measure; it is not
coverage of a known causal effect or of a known value of $\Psi_N$. Malawi is the primary survey and the two Haiti
surveys are replications. Because the ultimate-cluster standard error is expected to be conservative at a
first-stage fraction of 0.25, we expected both primary methods to have a variance ratio of at least one minus two
Monte Carlo standard errors.

\paragraph{Findings.} Table~\ref{tab:app} reports all three surveys and both primary estimators;
all 500 samples succeeded for each method. The adjusted urban-minus-rural differences in the
proportion very satisfied, using the cross-fitted TMLE whole-frame reference values, were
\AppMwTmleRef\ percentage points in Malawi, \AppHtsTmleRef\ in Haiti 2017--18 and
\AppHttTmleRef\ in Haiti 2013. All six primary variance ratios exceeded one, from
\AppHttAipwVr\ to \AppMwTmleVr\ (Figure~\ref{fig:app}b), so both primary methods met the stated expectation in all
three surveys. That expectation is a lower bound and does not require the estimated and
empirical variances to agree. For AIPW in Haiti 2013 the ratio, \AppHttAipwVr\ (Monte Carlo
standard error \AppHttAipwVrMcse), is within two Monte Carlo standard errors of one, so
conservativeness is not resolved there. The largest excess was for TMLE in Malawi,
\AppMwTmleVr\ (Monte Carlo standard error \AppMwTmleVrMcse). Mean reported standard errors
were \AppSePctMin\% to \AppSePctMax\% larger than the empirical standard deviations of the
estimates. Figure~\ref{fig:app}a illustrates the largest of these discrepancies: the Malawi TMLE estimates had
empirical standard deviation \AppMwTmleSdPp\ percentage points, against a mean reported standard error of
\AppMwTmleMeanSePp. Coverage of the whole-frame reference values ranged from \AppCovMin\% to
\AppCovMax\%, with Monte Carlo standard errors of \AppCovMcseMin\ to \AppCovMcseMax\
percentage points.

How much of the excess could the omitted finite-population correction account for? With about a quarter of the
facilities sampled in every stratum, the ultimate-cluster variance target can exceed the without-replacement variance
of the leading term by a factor of at most about $1/(1-0.25)\approx$ \AppFpcBound\ (dashed line in
Figure~\ref{fig:app}b; about \AppFpcBoundSe\ on the standard-error scale), attained only if all variation were between facilities (Theorem~\ref{thm:clt}). Five of the
six primary variance ratios were below this value. The Malawi TMLE ratio exceeded it by \AppMwTmleBoundZ\ Monte Carlo
standard errors, so the correction alone cannot explain that excess. In Malawi the two primary methods reported
nearly the same mean interval widths. The ratios of mean standard error to empirical standard deviation were
\AppMwTmleSeSd\ for TMLE and \AppMwAipwSeSd\ for AIPW (Table~\ref{S:tab:S-app-arms}).
With nearly equal reported standard errors, the larger TMLE ratio reflects less variation in its point estimates.
The comparison is descriptive: the frames do not reveal the between-facility share of variance, and the empirical
variance also contains the fitted-nuisance remainders, whose contribution was not isolated.

\begin{figure}[tbp]
\footnotesize
\centering
\includegraphics{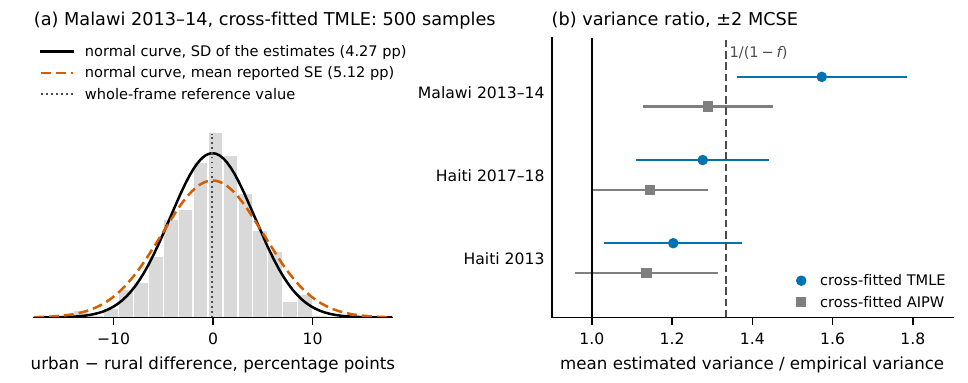}
\caption{Reported and empirical uncertainty in the SPA census-resampling study: 500 samples per survey, each drawing
approximately one quarter of the facilities in each stratum without replacement, then up to three eligible records
per sampled facility. (a) Malawi cross-fitted TMLE estimates of the adjusted urban-minus-rural difference in the
proportion ``very satisfied'', in percentage points (pp). The two normal curves share the mean of the estimates and
use the empirical standard deviation (\AppMwTmleSdPp~pp) and the mean reported standard error
(\AppMwTmleMeanSePp~pp); they illustrate the two spread measures and are not normality diagnostics or confidence
intervals. The dotted line is the algorithmic whole-frame reference value (median of 20 whole-frame fits). (b) Mean
estimated variance divided by empirical variance, for both primary estimators in the three surveys; bars are $\pm2$
replicate-bootstrap Monte Carlo standard errors (MCSE). The solid line marks agreement at 1. The dashed line,
$1/(1-0.25)\approx\AppFpcBound$, is an approximate benchmark at the nominal fraction $f=0.25$ for the largest ratio
$\sigma_m^2/\swor^2$ of the two leading-term variance scales that omitting the first-stage finite-population
correction can produce; it is neither an exact bound from the realized stratum fractions nor a bound on the fitted
estimator's empirical ratio. The SE/SD ratio of Table~\ref{tab:app} differs from the variance ratio in scale and
because the mean squared standard error is not the square of the mean standard error.}
\label{fig:app}
\par\smallskip
{\scriptsize\raggedright\noindent\textbf{Alt text:} (a) Histogram of the 500 Malawi cross-fitted TMLE estimates,
centred near zero, with two normal curves: a narrower solid one (standard deviation of the estimates) and a wider
dashed one (mean reported standard error). (b) Six variance ratios with error bars, for two estimators in three
surveys, all above 1; only the Malawi TMLE ratio lies above the dashed line at 1.33.\par}
\end{figure}

\paragraph{Sensitivity and limitations.} The sensitivity versions use unweighted nuisances and add
facility type and log caseload to the covariates, so they target a differently adjusted contrast, and differences
from the primary versions cannot be attributed to the nuisance weighting alone. Their calibration was not uniform
across frames (Table~\ref{S:tab:S-app-arms}). In Malawi the sensitivity TMLE underestimated the variance
(ratio \AppMwSensTmleVr) and covered its whole-frame reference value in \AppMwSensTmleCov\% of samples; \AppMwOverlapSens\% of the
estimated census propensities for that covariate set fell outside $[0.05,0.95]$, above the 10\% share at which a
covariate set is flagged as overlap-strained (Haiti \AppHtsOverlapSens\% and \AppHttOverlapSens\%; primary covariates \AppMwOverlapPrim\% in
all three frames). A separate Malawi overlap diagnostic logged a learner error on rare education levels and may have
used a reduced ensemble; it feeds into no estimate. The two Haiti surveys are separate frames from one country, not
independent replications across countries. More broadly, the study evaluates the implemented algorithms under one
imposed design (a quarter of facilities, at most three records each) on observed frames. It does not establish the
theorem's model-layer or nuisance-rate conditions, calibration under the surveys' original designs, or behaviour at
other fractions; Section~\ref{S:supp:app} describes the methods and gives the design diagnostics and the whole-frame reference values.

For an analyst, coverage of the whole-frame reference values near or above 95\% did not reveal how much the reported uncertainty
exceeded the estimates' actual spread; the variance ratio did.

{\small\setlength{\tabcolsep}{2.5pt}
\begin{longtable}{@{}llrrrr@{}}
\caption{Uncertainty of the two primary cross-fitted estimators in the SPA census-resampling study (500 samples
per survey under the imposed design: a quarter of facilities per facility-type stratum by SRS-WOR, then up to three
records per facility by SRS-WOR). The variance ratio is the mean estimated variance divided by the empirical variance
of the estimates across samples, with its Monte Carlo standard error (MCSE); SE/SD is the mean estimated standard
error divided by the empirical standard deviation. Coverage is that of the package's nominal 95\% confidence
interval, $\hat\psi\pm t_{0.975,d}\,\widehat{\mathrm{se}}$ with $d$ the design degrees of freedom (sampled facilities
minus strata), for each method family's whole-frame reference value (the median of 20 whole-frame fits), not for a known
effect; its MCSE treats the reference value as fixed. The reference column gives that adjusted
urban-minus-rural difference in the proportion ``very satisfied'', in percentage points (pp). Malawi, the primary
survey, comes first; all eight methods are in Table~\ref{S:tab:S-app-arms}.}\label{tab:app}\\
\toprule
survey & method & variance ratio (MCSE) & SE/SD & coverage, \% (MCSE) & reference, pp\\
\midrule
\endfirsthead
\toprule
survey & method & variance ratio (MCSE) & SE/SD & coverage, \% (MCSE) & reference, pp\\
\midrule
\endhead
\bottomrule
\endlastfoot
Malawi 2013--14 & cross-fitted TMLE & \ensuremath{1.573} (\ensuremath{0.106}) & \ensuremath{1.199} & 95.2 (1.0) & \ensuremath{-0.10}\\
 & cross-fitted AIPW & \ensuremath{1.289} (\ensuremath{0.081}) & \ensuremath{1.089} & 94.4 (1.0) & \ensuremath{-0.03}\\
Haiti 2017--18 & cross-fitted TMLE & \ensuremath{1.276} (\ensuremath{0.083}) & \ensuremath{1.101} & 97.4 (0.7) & \ensuremath{-4.31}\\
 & cross-fitted AIPW & \ensuremath{1.145} (\ensuremath{0.072}) & \ensuremath{1.042} & 96.2 (0.9) & \ensuremath{-4.29}\\
Haiti 2013 & cross-fitted TMLE & \ensuremath{1.203} (\ensuremath{0.086}) & \ensuremath{1.077} & 96.6 (0.8) & \ensuremath{-2.96}\\
 & cross-fitted AIPW & \ensuremath{1.136} (\ensuremath{0.089}) & \ensuremath{1.043} & 96.2 (0.9) & \ensuremath{-2.96}\\
 
\end{longtable}}

\section{Discussion}\label{sec:disc}

\paragraph{Main contribution.}
Under the conditions of Theorem~\ref{thm:clt}, survey-weighted cross-fitting with folds formed from whole PSUs within
strata supports inference for a finite-population contrast under first-stage SRS-WOR. The usual stratified
ultimate-cluster variance estimator, computed without a finite-population correction, estimates a with-replacement
variance target. Its relation to the without-replacement variance of the leading term makes the interval
asymptotically conservative for $\Psi_N$ when the first-stage fraction is appreciable, and asymptotically exact when
the relative gap between the two vanishes. Complement stability localizes the control of the empirical-process term
to the single realized fold-specific score, which is why no entropy condition on the learners is needed. The
nuisance-rate and population-model conditions remain essential: the result is not a guarantee for arbitrary
learners or survey designs.

\paragraph{Finite-sample evidence.}
The simulation qualifies how the theorem should be used at finite sample sizes. Each remainder diagnostic assessed against the
study's negligibility benchmark exceeded it at the largest sample size, although the correctly specified learner's
second-order term, summarized without an interval, fell below it. With the correctly specified learner, coverage in the
fixed-strata regime was close to the oracle's; with the slower throttled learners it was \SimFsCovGapMin\ to
\SimFsCovGapMax\ percentage points below the oracle's. For the throttled learner at $\alpha=0.30$ the benchmark was demanding by
design: had the product remainder followed its envelope, its standardized 0.9 quantile would have fallen by only
about a quarter between $m=200$ and 3200; it fell by \SimCfDtTwoDropPct\%, faster than the envelope, yet ended at
\SimCfDtTwoTimes\ times the benchmark. The random forest's standardized mean error increased over the sample sizes
studied while its coverage stayed near nominal, a pattern that coverage alone would not reveal; the required
convergence-rate condition was not established for this learner. An asymptotic approximation can thus be useful in some
configurations without being uniformly accurate across sample sizes and nuisance fits.

\paragraph{Implications for practice.}
Both empirical studies point to the same practical issue at appreciable first-stage fractions. In the simulation
configuration with half of each stratum's PSUs sampled and 100 PSUs, the fitted estimator's spread exceeded $\swor$
by about one fifth, while the ultimate-cluster standard error, built on the larger $\sigma_m$ and overestimating even
that target, made the interval over-cover. In the SPA frames, with a quarter of facilities sampled, the reported
standard errors exceeded the estimates' empirical standard deviations by \AppSePctMin\% to \AppSePctMax\% on average.
These observations concern the configurations studied and are not guarantees. Under the theorem's conditions, the
ultimate-cluster variance targets a quantity at least as large as the leading term's design variance; the
observations add that near-nominal or higher coverage is not evidence that the standard error is accurate. From a single survey the analyst cannot observe this excess. Under SRS-WOR the ultimate-cluster
variance target exceeds the without-replacement variance of the leading term by at most the factor
$1/(1-\max_h f_h)$.

\paragraph{Limitations and future work.}
Several limits apply.

\emph{Design.} The cross-fitted result is established under first-stage SRS-WOR with bounded
cluster sizes. There, the held-out fold, conditioned on the training PSUs, admits the poststratified-complement
design centring. Extending the result to unequal-probability first stages, where that centring can fail, is open,
and the result is not intended to justify applications under such designs. The application's folds are balanced only
approximately, outside the scope proved here (Section~\ref{sec:app}).

\emph{Nuisance and model assumptions.} Complement stability \eqref{eq:cs} is formulated for the realized
fold-specific score and proved from the stratified-mixture model (Condition~\ref{cond:mix}). We do not relate it to
a Donsker condition in either direction or exhibit a non-Donsker class satisfying it. Cross-fitting removes the need for an entropy condition but not for nuisance rates.
Lemma~\ref{S:lem:prodrate} controls $T_2$ through the product of the outcome and propensity errors;
the separate $\od_p(m^{-1/4})$ rates assumed here suffice for that product to be $\od_p(m^{-1/2})$. $\D$ is the efficient influence function of the average treatment
effect only in the nonparametric model for a single unit's data, and no efficiency claim is made under the clustered
sampling model.

\emph{Empirical evidence.} The simulation does not exercise the broader fixed-function designs, the combined regime, or the
qualification that positive between-PSU variance at a fixed fraction need not keep the two normings strictly
separated. The empirical studies compare variants within the AIPW and TMLE family rather than estimators from other
families of survey-adjusted methods. Explanations of the finite-sample discrepancies reported here are left for
future work.

\section*{Supplementary material}
The supplement contains the proofs, the simulation's methods and additional results, and the application's details.

\section*{Data and code availability}
The simulation and SPA analysis code, the locked simulation environment and pinned SPA package
versions, the exact \textsf{svytmle} build, saved summary results and aggregate SPA outputs,
and scripts that regenerate the displays are available in the versioned GitHub release
\texttt{v1.1.1} of the reproduction repository \citep{KarimCode2026}.
The repository README describes the files, the two release archives of saved simulation outputs and the 
reproduction commands. The SPA microdata are available through registration and approval from the Demographic and
Health Surveys (DHS) Program and are not redistributed. The analysis and resampling code is included.

\end{document}